\newtheorem{corollary}{Corollary}
\newtheorem{theorem}{Theorem}
\newtheorem{lemma}{Lemma}
\newcommand{\rank}{\text{rank}}
\renewcommand{\vec}{\mathbf}
\newcommand{\vx}{\vec{x}}
\newcommand{\cost}{\text{cost}}
\newcommand{\mac}{\text{mc}}
\newcommand{\soc}{\text{sc}}
\newcommand{\diam}{\text{diam}}
\title{Verifiably Truthful Mechanisms\footnote{The authors would like to thank Aris Filos-Ratsikas for a helpful discussion on characterizations of mechanisms for single peaked preferences and
Joan Feigenbaum, Kevin Leyton-Brown, Peter Bro Miltersen, and Tuomas Sandholm for useful feedback.}}
\author{
\textbf{Simina Br\^anzei}\\
\small{Aarhus University}\\
\small{simina@cs.au.dk}
\footnote{
Br\^anzei acknowledges support from the Sino-Danish Center for the Theory of Interactive Computation, funded by the Danish National Research Foundation and the National Science Foundation of China (under the grant 61061130540), and from the Center for research in the Foundations of Electronic Markets (CFEM), supported by the Danish Strategic Research Council.}
\and
\textbf{Ariel D. Procaccia}\\
\small{Carnegie Mellon University}\\
\small{arielpro@cs.cmu.edu}
\footnote{Procaccia was partially supported by the NSF under grants CCF-1215883 and IIS-1350598.}
}
\date{}
\begin{document}
\maketitle

\thispagestyle{empty}
\setcounter{page}{0}

\begin{abstract}
It is typically expected that if a mechanism is truthful, then the agents would, indeed, truthfully report their private information. But why would an agent \emph{believe} that the mechanism is truthful? We wish to design truthful mechanisms, whose truthfulness can be verified efficiently (in the computational sense). Our approach involves three steps: (i) specifying the structure of mechanisms, (ii) constructing a verification algorithm, and (iii) measuring the quality of verifiably truthful mechanisms. We demonstrate this approach using a case study: approximate mechanism design without money for facility location.
\end{abstract}

\newpage

\section{Introduction}

The mechanism design literature includes a vast collection of clever schemes that, in most cases, provably give rise to a specified set of properties. Arguably, the most sought-after property is \emph{truthfulness}, more formally known as \emph{incentive compatibility} or \emph{strategyproofness}: an agent must not be able to benefit from dishonestly revealing its private information. Truthfulness is, in a sense, a prerequisite for achieving other theoretical guarantees, because without it the mechanism may receive unpredictable input information that has little to do with reality. For example, if the designer's goal is to maximize utilitarian social welfare (the sum of agents' utilities for the outcome), but the mechanism is not truthful, the mechanism would indeed maximize social welfare --- albeit, presumably, with respect to the wrong utility functions!

An implicit assumption underlying the preceding (rather standard) reasoning is that when a truthful mechanism is used, (rational) agents would participate truthfully. This requires the agents to \emph{believe} that the mechanism is actually truthful. Why would this be the case? Well, in principle the agents can look up the proof of truthfulness.\footnote{A related, interesting question is: If we told human players that a non-truthful mechanism is provably truthful, would they play truthfully?}  A more viable option is to directly verify truthfulness by examining the specification of the mechanism itself, but, from a computational complexity viewpoint, this problem would typically be extremely hard --- even undecidable. This observation is related to the more general principle that the mechanism should be transparent or
simple, so that bounded rational economic agents can reason about it and take decisions efficiently.

Motivated by the preceding arguments, our goal in this paper is to design mechanisms that are \emph{verifiably truthful}. Specifically, we would like the verification to be \emph{efficient} --- in the computational sense (i.e., polynomial time), not the economic sense. In other words, the mechanism must be truthful, and, moreover, each agent must be able to efficiently verify this fact. 

\subsection{Our Approach and Results}
\label{subsec:res}

Our approach to the design of verifiably truthful mechanisms involves three steps:

\begin{enumerate}
\item[I] \emph{Specifying the structure of mechanisms}: The verification algorithm will receive a mechanism as input --- so we must rigorously specify which mechanisms are admissible as input, and what they look like.

\item[II]\emph{Constructing a verification algorithm}: Given a mechanism in the specified format, the algorithm decides whether the mechanism is truthful. 

\item[III]\emph{Measuring the quality of verifiably truthful mechanisms}: The whole endeavor is worthwhile (if and) only if the family of mechanisms whose truthfulness can be verified efficiently (via the algorithm of Step 2) is rich enough to provide high-quality outcomes. 
\end{enumerate}

We instantiate this program in the context of a case study: \emph{approximate mechanism design without money for facility location}~\cite{PT13ec}. The reason for choosing this specific domain is twofold. First, a slew of recent papers has brought about a good understanding of what quality guarantees are achievable via truthful facility location mechanisms~\cite{AFPT10,LWZ09,LSWZ10,NST12,FT10,FT13a,FT13b,Than10,TIY11,CYZ13,WF13}. Second, facility location has also served as a proof of concept for the approximate mechanism design without money agenda~\cite{PT13ec}, whose principles were subsequently applied to a variety of other domains, including allocation problems~\cite{GCR09,GC10,DG10,CGG13}, approval voting~\cite{AFPT11}, kidney exchange~\cite{AFKP14,CFP11}, and scheduling~\cite{Kout11}. Similarly, facility location serves as an effective proof of concept for the idea of verifiably truthful mechanisms, which, we believe, is widely applicable. 

We present our results according to the three steps listed above: 

\begin{enumerate}
\item[I] In \S\ref{sec:step1}, we put forward a representation of facility location mechanisms. In general, these are arbitrary functions mapping the reported locations of $n$ agents on the real line to the facility location (also on the real line). We present \emph{deterministic} mechanisms as \emph{decision trees}, which branch on comparison queries in internal nodes, and return a function that is a convex combination of the reported locations in the leaves. Roughly speaking, randomized mechanisms are distributions over deterministic mechanisms, but we use a slightly more expressive model to enable a concise representation for certain randomized mechanisms that would otherwise need a huge representation. 

\item[II] The \emph{cost} of an agent is the distance between its (actual) location, which is its private information, and the facility location. A deterministic mechanism is truthful if an agent can never decrease its cost by reporting a false location. In \S\ref{sec:step2}, we show that the truthfulness of a deterministic mechanism can be verified in polynomial time in the size of its decision tree representation and number of agents. We also demonstrate that one cannot do much better: it is necessary to at least inspect all the tree's leaves. We establish that the efficient verification result extends to randomized mechanisms, as long as the notion of truthfulness is \emph{universal truthfulness}: it must be impossible to gain from manipulating one's reported location, regardless of the mechanism's coin tosses. 

\item[III] Building on the results of Step II, we focus on decision trees of polynomial size --- if such mechanisms are truthful, their truthfulness can be efficiently verified. In \S\ref{sec:step3}, we study the quality of polynomial-size decision trees, via two measures of quality: the \emph{social cost} (the sum of agents' cost functions) and the maximum cost (of any agent). Figure~\ref{tab:summary} summarizes our results. The table on the left shows tight bounds on the (multiplicative, worst-case) approximation ratio that can be achieved by truthful mechanisms~\cite{PT13ec} --- deterministic in the first row, randomized in the second. The (lower) bound for the maximum cost of universally truthful mechanisms is new.
The results for efficiently verifiable mechanisms are shown in the right table. Our main results pertain to the social cost (left column): while deterministic polynomial-size decision trees only achieve an approximation ratio of $\Theta(n/\log n)$, we construct (for any constant $\epsilon>0$) a polynomial-size, randomized, universally truthful decision tree approximating the social cost to a factor of $1+\epsilon$. 
\end{enumerate}

\begin{figure}
\begin{center}
\begin{tabular}{|c||c|c|}
\hline
\begin{tabular}{@{}c@{}}\small{\emph{General}} \\ \small{\emph{Mechanisms}}\end{tabular}& \small{Social Cost}& \small{Max Cost}\\
\hline\hline
\small{Truthful} & 1 & 2\\
\hline\hline
\small{Univ. Truthful} & 1 & 2 $(*)$\\
\hline
\end{tabular}
\quad
\begin{tabular}{|c||c|c|}
\hline
\begin{tabular}{@{}c@{}}\small{\emph{Polynomial-size}} \\ \small{\emph{Decision Trees}}\end{tabular}& \small{Social Cost}& \small{Max Cost}\\
\hline\hline
\small{Truthful} & $\Theta\left(\frac{n}{\log{n}}\right)$ & 2\\
\hline\hline
\small{Univ. Truthful} & $1 + \epsilon$ & 2\\
\hline
\end{tabular}
\end{center}
\caption{The results of \S\ref{sec:step3}, outlined in \S\ref{subsec:res}. The lower bound $(*)$ for general mechanisms is also shown in this paper.}
\label{tab:summary}
\end{figure}

\subsection{Related Work}

Verification is a common theme in algorithmic mechanism design, but in the past it was always the agents' reports that were being verified, not the properties of the mechanism itself. In fact, in the eponymous paper by Nisan and Ronen~\cite{NR01}, a class of mechanisms with verification (and money) for scheduling was proposed. These mechanisms are allowed to observe both the reported types and the actual types (based on the execution of jobs), and payments may depend on both. Verification of agents' reports has subsequently played a role in a number of papers; of special note is the work of Caragiannis et al.~\cite{CESY12}, who focused on different notions of verification. They distinguished between \emph{partial verification}, which restricts agents to reporting a subset of types that is a function of their true type (e.g., in scheduling agents can only report that they are slower than they actually are, not faster), and \emph{probabilistic verification}, which catches an agent red handed with probability that depends on its true type and reported type. There are also examples of this flavor of verification in approximate mechanism design without money~\cite{Kout11}.

A small body of work in multiagent systems~\cite{PW03,BFVW06,TGV09} actually aims to verify properties of mechanisms and games. The work of Tadjouddine et al.~\cite{TGV09} is perhaps closest to ours, as they verify the truthfulness of auction mechanisms. Focusing on the Vickrey Auction~\cite{Vick61}, they specify it using the Promela process modeling language, and then verify its truthfulness via model checking techniques. This basically amounts to checking all possible bid vectors and deviations in a discretized bid space. To improve the prohibitive running time, abstract model checking techniques are applied. While model checking approaches are quite natural, they inevitably rely on heuristic solutions to problems that are generally very hard. In contrast, we are interested in mechanisms whose truthfulness can be verified \emph{in polynomial time}.

Mu'alem~\cite{Mualem05} considers a motivating scenario similar to ours and focuses on testing extended monotonicity, which is a property required for truthfulness in the single parameter domain studied therein. In particular, Mu'alem shows that if a function $f$ is $\epsilon$-close 
to extended monotonicity, then there \emph{exists} an associated payment function $p$ such that the mechanism given by the tuple $(f, p)$ is $(1- 2 \epsilon$)-truthful. She also describes a shifting technique for obtaining almost truthful mechanisms and a monotonicity tester.
While studying truthfulness in the context of property testing remains an interesting question for future work, we would like to obtain mechanisms whose truthfulness can be verified exactly and in polynomial time (independent of the size of the domain --- in fact, our domain is continuous).
On a technical level, we study a setting without payments, so our setting does not admit a close connection between monotonicity and truthfulness. 


Kang and Parkes~\cite{KP06} consider the scenario in which multiple entities (e.g. companies, people, network services) can deploy mechanisms in an open computational infrastructure. Like us, they are interested in verifying the truthfulness of mechanisms, but they sidestep the question of how mechanisms are represented by focusing on what they call \emph{passive verification}: their verifier acts as an intermediary and monitors the sequence of inputs and outputs of the mechanism. The verifier is required to be sound and complete; in particular, if the mechanism is not strategyproof, the verifier is guaranteed to establish this fact after observing all the possible inputs and outputs.  

Our work is also related to the line of work on \emph{automated mechanism design}~\cite{CS02b}, which seeks to automatically design truthful mechanisms that maximize an objective function, given a prior distribution over agents' types. In an informal sense, this problem is much more difficult than our verification problem, and, indeed, in general it is computationally hard even when the mechanism is explicitly represented as a function whose domain is all possible type vectors. Automated mechanism design is tractable in special cases --- such as when the number of agents is constant and the mechanism is randomized --- but these results do not yield nontrivial insights on the design of verifiably truthful mechanisms.

\section{Step I: Specifying the Structure of Mechanisms}
\label{sec:step1}

We consider the (game-theoretic) facility location problem~\cite{PT13ec}. An instance includes a set $N=\{1,\ldots,n\}$ of agents. Each agent $i\in N$ has a location $x_i$. 
The vector $\vx = \langle x_1, \ldots, x_n \rangle$ represents the location profile. We relegate the presentation of the strategic aspects of this setting to Section~\ref{sec:step2}. 

\subsection{Deterministic Mechanisms}

A \emph{deterministic mechanism} (for $n$ agents) is a function $\mathcal{M}: \mathbb{R}^{n} \rightarrow \mathbb{R}$, which maps each location profile $\vx$ to a facility location $y\in \mathbb{R}$. We put forward a simple, yet expressive, representation of deterministic mechanisms, via \emph{decision trees}. 

In more detail, given input $\vx = \langle x_1, \ldots, x_n \rangle$,
the mechanism is represented as a tree, with: 
\begin{itemize}
\item \emph{Internal nodes}: used to verify sets of constraints over the input variables. We focus on a comparison-based model of computation, in which each internal node 
%
%
verifies one constraint, of the form $(x_i \geq x_j)$, $(x_i \leq x_j)$, $(x_i > x_j)$, or $(x_i < x_j)$, for some $i, j \in N$.
The node has two outgoing edges, that are taken depending on whether the condition is true or false.


\item \emph{Leaves}: store the outcome of the mechanism if the path to that leaf is taken, i.e. the facility location.
We require that for each leaf $\mathcal{L}$, the location of the facility at $\mathcal{L}$, $y_{\mathcal{L}}(x)$, is a convex combination of the input locations: 
$y_{\mathcal{L}}(\vx) = \sum_{i=1}^{n} \lambda_{\mathcal{L},i} \cdot x_i$, where the $\lambda_{\mathcal{L},i}$ are constants with $\lambda_{\mathcal{L},i} \geq 0$ and $\sum_{i=1}^{n} \lambda_{\mathcal{L},i} = 1$.
%
\end{itemize}

\tikzset{
  treenode/.style = {align=center, inner sep=2.4pt,
  font=\sffamily},
  arn_m/.style = {treenode, rectangle, rounded corners=1.5mm, black,level distance=1.1cm,sibling distance=1.8cm,
  font=\sffamily\bfseries, draw=black, fill=white!90!gray, fill opacity = 0.9, minimum width=0.7em,minimum height=1.8em},
  arn_chance/.style = {treenode, ellipse, white,level distance=1.5cm,sibling distance=2.5cm, minimum width=1.2em, 
minimum height=1.2em, draw=black, fill=black,very thick},
  arn_n/.style = {treenode, rectangle, white, font=\sffamily\bfseries, 
  draw=gray, fill=gray, minimum width=2em},
  arn_r/.style = {treenode, rectangle, black,draw=black, fill=gray!70!white, 
  minimum height=2.2em,rounded corners=2mm,
  minimum width=2.3em, thick},
  arn_x/.style = {treenode, rectangle, draw=black,
  minimum width=0.5em, minimum height=0.5em}
}

\begin{figure}[b]
\begin{minipage}[b]{0.50\linewidth}
\centering
\scalebox{.74}{
\begin{tikzpicture}[->,>=stealth',level/.style={sibling distance = 5.2cm/#1,
  level distance = 1.8cm}] 
\node [arn_r] {$\mathtt{\frac{x_1 + x_2 + \ldots x_n}{n}}$}
;
\end{tikzpicture}
}
\caption{The average mechanism.} \label{fig:Average_N}
\label{fig:avg}
\end{minipage}
\hspace{0.5cm}
\begin{minipage}[b]{0.50\linewidth}
\centering
\scalebox{.74}{
\begin{tikzpicture}[->,>=stealth',level/.style={sibling distance = 6cm/#1,
  level distance = 1.6cm}] 
\node [arn_r] {$\mathtt{x_i}$}
;
\end{tikzpicture}
}
\caption{Dictatorship of agent $i$.} \label{fig:Dictator}
\label{fig:dictator}
\end{minipage}
\end{figure}

\begin{figure}[t]
\centering
\scalebox{.74}{
\begin{tikzpicture}[->,>=stealth',level/.style={sibling distance = 5.8cm/#1,
  level distance = 1.7cm}] 
\node [arn_m] {$\mathtt{x_1 \geq x_2}$} 
child{ node [arn_m] {$\mathtt{x_2 \geq x_3}$}
            child{ node [arn_r] {$\mathtt{x_2}$} 
 edge from parent node[above left] {$\mathtt{yes}$}            
}
            child{ node [arn_m] {$\mathtt{x_1\geq x_2}$}
                            child{ node [arn_r] {$\mathtt{x_3}$}  edge from parent node[above left]
                         {$\mathtt{yes}$}}
                            child{ node [arn_r] {$\mathtt{x_1}$}  edge from parent node[above right]
                         {$\mathtt{no}$}}
 edge from parent node[above right]
                         {$\mathtt{no}$}            
    }
 edge from parent node[above left]
                         {$\mathtt{yes}$}  
}
    child{ node [arn_m] {$\mathtt{x_2 \geq x_3}$}
            child{ node [arn_m] {$\mathtt{x_1 \geq x_3}$} 
                            child{ node [arn_r] {$\mathtt{x_3}$}  edge from parent node[above left]
                         {$\mathtt{no}$}}
                            child{ node [arn_r] {$\mathtt{x_1}$}  edge from parent node[above right]
                         {$\mathtt{no}$}}
 edge from parent node[above left]
                         {$\mathtt{yes}$}            
}          
            child{ node [arn_r] {$\mathtt{x_2}$}  edge from parent node[above right]
                         {$\mathtt{no}$}            
}
edge from parent node[above right]
                         {$\mathtt{no}$}
        }
;
\end{tikzpicture}
}
\caption{The median mechanism for 3 agents.} \label{fig:Median3}
\end{figure}

For example, Figure~\ref{fig:avg} shows the decision tree representation of the average mechanism, which returns the average of the reported locations. It is just a single leaf, with coefficients $\lambda_i=1/n$ for all $i\in N$. Figure~\ref{fig:dictator} shows a dictatorship of agent $i$ --- whatever location is reported by agent $i$ is always selected. Figure~\ref{fig:Median3} shows the median mechanism for $n=3$, which returns the median of the three reported locations; this mechanism will play a key role later on. 

We remark that our positive results are based on mechanisms that have the so-called \emph{peaks-only} property: they always select one of the reported locations. However, our more expressive definition of the leaves of the decision tree (as convex combinations of points in $\vx$) is needed to compute optimal solutions under one of our two objectives (as we discuss below), and also strengthens our negative results.

\subsection{Randomized Mechanisms}

Intuitively, randomized mechanisms are allowed to make branching decisions based on coin tosses. Without loss of generality, we can just toss all possible coins in advance, so a randomized mechanism can be represented as a probability distribution over deterministic decision trees. 
However, this can lead to a large representation of simple mechanisms that consist of the same (fixed) subroutine executed with possibly different input variables.
For example, the mechanism that selects a (not very small) subset of agents uniformly at random and computes the median of the subset can be seen as a median mechanism parameterized by the identities of the agents.
In order to be able to represent such mechanisms concisely, we make the representation a bit more expressive.

Formally, a randomized mechanism is represented by a decision tree with a chance node of degree $K$ as the root,
such that the $r$'th edge 
selects 
a decision tree $\mathcal{T}_r$
and is taken with probability $p_r$,
where $\sum_{i=1}^{K} p_i = 1$.
Each tree $\mathcal{T}_r$ is defined as follows:
\begin{itemize}
\item There is a set of agents $N_r\subseteq N$, such that the locations $x_i$ for $i\in N$ appear directly in the internal nodes and leaves of the tree. 
\item There is a set of parameters $Z_r = \{z_{r,1},\ldots,z_{r,m_r}\}$, that also appear in the internal nodes and leaves of $\mathcal{T}_r$, where $0 \leq m_r\leq |N\setminus N_r|$.
\item The description of $\mathcal{T}_r$ includes a probability distribution over tuples of $m_r$ distinct agents from $N\setminus N_r$.   
\end{itemize} 

The semantics are as follows. At the beginning of the execution, a die is tossed to determine the index $r \in \{1, \ldots, K\}$ of the function (i.e. tree $\mathcal{T}_r$) to be implemented. Then, the parameters $z_{r,j}$ are bound to locations of agents from $N\setminus N_r$ according to the given probability distribution for $\mathcal{T}_r$; each $z_{r,j}$ is bound to a different agent. 
At this point all the parameters in the nodes and leaves of $\mathcal{T}_r$ have been replaced by variables $x_i$,
and we just have a deterministic decision tree, which 
is executed as described above.

For example, say we want to implement the mechanism that selects three agents uniformly at random from $N$ and outputs the median of these three agents. This mechanism requires a randomized decision tree with a chance node of degree one, that selects
with probability $p_1 = 1$
a single decision tree $\mathcal{T}_1$, which is the tree in Figure~\ref{fig:Median3} with the $x_i$ variables replaced by $z_i$.
We set $N_1=\emptyset$ (thus the tree $\mathcal{T}_1$ is completely parameterized), and the probability distribution over distinct subsets of size $3$ from $N\setminus N_1=N$ is just the uniform distribution over such subsets.

\section{Step II: Constructing a Verification Algorithm}
\label{sec:step2}

In Section~\ref{sec:step1} we focused on the non-strategic aspects of the facility location game: agents report their locations, which are mapped by a mechanism to a facility location. The potential for strategic behavior stems from the assumption that the agents' locations $\vx$ are private information --- $x_i$ represents agent $i$'s \emph{ideal} location for the facility (also known as agent $i$'s \emph{peak}). Like Procaccia and Tennenholtz~\cite{PT13ec}, and almost all subsequent papers, we assume that the \emph{cost} of agent $i$ for facility location $y$ is simply the Euclidean distance between (the true) $x_i$ and $y$,
$$
\cost(x_i,y) = |x_i - y|.
$$

\subsection{Deterministic Mechanisms}

A deterministic mechanism $\mathcal{M}:\mathbb{R}^n\rightarrow \mathbb{R}$ is \emph{truthful} if for every location profile $\vx\in\mathbb{R}^n$, every agent $k\in N$, and every $x_k'\in\mathbb{R}$, $\cost(x_k,\mathcal{M}(\vx))\leq \cost(x_k,\mathcal{M}(x_k',\vx_{-k})$, where $\vx_{-k}=\langle x_1,\ldots,x_{k-1},x_{k+1},\ldots,x_n\rangle$. Our next goal is to construct an algorithm that receives as input a deterministic mechanism, represented as a decision tree, and verifies that it is truthful. 

The verification algorithm is quite intuitive, although its formal specification is somewhat elaborate. Consider a mechanism $\mathcal{M}:\mathbb{R}^n\rightarrow \mathbb{R}$ that is represented by a tree $\mathcal{T}$. For a leaf $\mathcal{L}$, denote the location chosen by $\mathcal{M}$ at this leaf by $y_{\mathcal{L}}(\vx) = \sum_{i=1}^{n} \lambda_{\mathcal{L},i} \cdot x_{i}$.
In addition, let $\mathcal{C(L)}$ denote the set of constraints encountered on the path to $\mathcal{L}$. For example, the set of constraints corresponding to the leftmost 
leaf in Figure \ref{fig:Median3} is $\{(x_1 \geq x_2), (x_2 \geq x_3)\}$, while the second leaf from the left verifies: $\{(x_1 \geq x_2),(x_2 < x_3), (x_1 \geq x_3)\}$. We define a procedure, \textsc{Build-Leaf-Constraints}, that gathers these constraints (Algorithm~\ref{alg:leaf}). One subtlety is that the procedure ``inflates'' strict inequality constraints to constraints that require a difference of at least $1$; we will explain shortly why this is without loss of generality.

\begin{algorithm}[t]
\label{alg:main}
\caption{\textsc{Truthful$(\mathcal{T})$} //\textit{ Verifier for Deterministic Mechanisms}}
\KwData{mechanism $\mathcal{T}$}
\KwResult{\emph{true} if $\mathcal{T}$ represents a truthful mechanism, \emph{false} otherwise}
  \textsc{Build-Leaf-Constraints($\mathcal{T}$)}\\
\ForEach{$k \in N$} {
    \ForEach{leaf $\mathcal{L} \in \mathcal{T}$} {
	//\textit{ $y_{\mathcal{L}}(\vx)$ is the symbolic expression for the facility at $\mathcal{L}$ on $\vx$}\\
	// \textit{$d_k(\vx)$ is agent $k$'s distance from the facility on input $\vx$}\\
        \ForEach{$d_k(\vx) \in \left\{ x_k - y_{\mathcal{L}}(\vx),  -x_k + y_{\mathcal{L}}(\vx)\right\}$} {
	//\textit{ two cases, for $x_k$ to the left or right of the facility $y_{\mathcal{L}}(\vx)$}\\
        \ForEach{leaf $\mathcal{L}' \in \mathcal{T}$} {
        \ForEach{$d'_k(\vx') \in \left\{ x_k' - y_{\mathcal{L'}}(\vx'),  -x_k' + y_{\mathcal{L'}}(\vx')\right\}$} {
          $\mbox{\emph{inc}}(\vx, \vx') \leftarrow \left\{(d_k(\vx) - d'_k(\vx') \geq 1),d_k(\vx)\geq 0,d_k'(\vx')\geq 0\right\}$ \\//\textit{ utility increase from $\vx$ to $\vx'$, distances are non-negative}\\
          \If{\textsc{Exists-Solution}$\left(k, \mathcal{C}_{\mathcal{L}}, \mathcal{C}_{\mathcal{L}'}, \mbox{\textit{inc}}\right)$} {
              \Return \emph{False}\\
          } 
}
}
        }
    }
}

\Return \emph{True}

\end{algorithm}


\begin{algorithm}[t]
\caption{\textsc{Exists-Solution}($k, \mathcal{C}_{\mathcal{L}}, \mathcal{C}_{\mathcal{L'}}', \mbox{\textit{inc}})$}
\KwData{agent $k$ and symbolic constraint sets $\mathcal{C}_{\mathcal{L}}, \mathcal{C}_{\mathcal{L}'}'$, \textit{inc}}
\KwResult{\emph{true} $\iff$ $\exists \; x_1, \ldots, x_n, x_k' \in \mathbb{R}^{+}$ \emph{subject to} $\mathcal{C}_{\mathcal{L}}(\vx) \; \& \; \mathcal{C}_{\mathcal{L}}'(x_k', \vx_{-k}) \; \& \;  \mbox{\textit{inc}}(\vx, (x_k', \vx_{-k}))$}

$\vx' \leftarrow (x_1, \ldots, x_{i-1}, x_i', x_{i+1}, \ldots, x_n)$\\

$W \leftarrow \{\mathcal{C}_{\mathcal{L}}(\vx), \mathcal{C}_{\mathcal{L}}'(\vx'), \mbox{\textit{inc}}(\vx, \vx')\}$ \\
$\vec{z} \leftarrow (x_1, \ldots, x_n, x_i')$\\

\Return $\mbox{\textsc{Solve}}(\vec{z}, W, \vec{z} \geq 0)$\ //\ \emph{Linear program solver}
%
\end{algorithm}

The main procedure, \textsc{Truthful} (given as Algorithm~\ref{alg:main}), checks whether there exist location profiles $\vx$ and $\vx'$ that differ only in the $k$'th coordinate, such that $\vx$ reaches leaf $\mathcal{L}$ (based on the constraints of the \textsc{Build-Leaf-Constraints} procedure, given
as Algorithm $3$ in Section~\ref{sec:missing_pseudocode} of the appendix), $\vx'$ reaches leaf $\mathcal{L}'$, and $\cost(x_k,y_{\mathcal{L}'}(\vx'))+1\leq \cost(x_k,y_{\mathcal{L}}(\vx))$, i.e. the reduction in cost is at least $1$.

So why can we ``inflate'' strict inequalities by requiring a difference of $1$? Assume that we are given a mechanism $\mathcal{T}$ and an agent $i$ such that for some strategy profiles $\vx$ and $\vx'$ with $\vx_{-i} = \vx_{-i}'$, agent $i$ can strictly benefit by switching from $\vx$ to $\vx'$.
Then there exists $\epsilon > 0$ such that agent $i$'s improvement is at least $\epsilon$, and for every strict inequality satisfied by $\vx$ and $\vx'$, the difference between the terms is at least $\epsilon$; for example, if $x_k > x_l$, then it is the case that $x_k - x_l \geq \epsilon$.
Since each facility location is a homogeneous linear function of the input $\vx$, all variables can be multiplied by $\frac{1}{\epsilon}$ to obtain that $\vx/\epsilon$ and $\vx'/\epsilon$ satisfy the more stringent constraints (with a difference of $1$) on agent locations and facility locations. 

Finally, this algorithm works in polynomial time because the procedure \textsc{Exists-Solution}, which checks whether there is a solution to the different constraints (corresponding to a profitable manipulation), just solves a linear program using the procedure \textsc{Solve}. 

We summarize the preceding discussion with the following theorem. 

\begin{theorem}
\label{thm:step2ub}
Let $N=\{1,\ldots,n\}$. The truthfulness of a deterministic mechanism $\mathcal{M}$ represented as a decision tree $\mathcal{T}$ can be verified in polynomial time in $n$ and $|\mathcal{T}|$. 
\end{theorem}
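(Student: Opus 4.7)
The plan is to establish two things about Algorithm \textsc{Truthful}: (i) it correctly decides whether $\mathcal{T}$ is truthful, and (ii) it runs in time polynomial in $n$ and $|\mathcal{T}|$. The runtime analysis is straightforward bookkeeping, so I expect correctness, and in particular the treatment of the strict inequalities encountered on root-to-leaf paths, to be the main point requiring care.

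For correctness, I would argue both directions separately. For the ``if'' direction, suppose \textsc{Exists-Solution} returns true for some choice of $k$, leaves $\mathcal{L},\mathcal{L}'$, and sign choices for $d_k,d_k'$. Then the LP solution supplies a profile $\vx$ with $\vx$ satisfying $\mathcal{C}_{\mathcal{L}}$, a vector $\vx' = (x_{-k},x_k')$ satisfying $\mathcal{C}_{\mathcal{L}'}$, and non-negative symbolic distances $d_k(\vx), d_k'(\vx')$ with $d_k(\vx)-d_k'(\vx')\ge 1$. Because the path constraints are satisfied, $\vx$ reaches $\mathcal{L}$ and $\vx'$ reaches $\mathcal{L}'$ under $\mathcal{T}$; and because the symbolic distances are non-negative, they coincide with the true costs $\cost(x_k,y_{\mathcal{L}}(\vx))$ and $\cost(x_k,y_{\mathcal{L}'}(\vx'))$. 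Hence agent $k$ strictly gains by reporting $x_k'$, so $\mathcal{T}$ is not truthful.

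For the ``only if'' direction, suppose $\mathcal{T}$ is not truthful, witnessed by $k$, $\vx$, $\vx'=(x_{-k},x_k')$ with $\cost(x_k,y_{\mathcal{L}'}(\vx'))<\cost(x_k,y_{\mathcal{L}}(\vx))$, where $\mathcal{L},\mathcal{L}'$ are the leaves reached. I choose the signs in the $d_k, d_k'$ alternatives so that both symbolic distances equal the true (non-negative) Euclidean costs. The difficulty is that the witness only gives a strict improvement $\cost(x_k,\mathcal{M}(\vx))-\cost(x_k,\mathcal{M}(\vx'))=\delta>0$, and the path constraints include strict inequalities, while the LP uses weak inequalities and a threshold of $1$. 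The key observation, which I would quote from the preceding discussion in the text, is a homogeneity argument: let $\epsilon>0$ be smaller than $\delta$ and smaller than the (finitely many) slack amounts of every strict path inequality satisfied by $\vx$ and $\vx'$. Scaling $\vx,\vx'$ by $1/\epsilon$ preserves all constraints (since each $y_{\mathcal{L}}$ is a homogeneous linear form, so the same leaves are reached), converts every strict inequality into one with slack at least $1$, and scales the cost improvement to at least $1$. The resulting scaled profile is a feasible point of the LP built by \textsc{Truthful}, so \textsc{Exists-Solution} returns true and the algorithm outputs \emph{False}.

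For the runtime, the four nested loops in Algorithm~\ref{alg:main} iterate $n\cdot 2\cdot |\mathcal{T}|\cdot 2\cdot |\mathcal{T}|=O(n|\mathcal{T}|^2)$ times. Each call to \textsc{Exists-Solution} assembles a linear program with at most $n+1$ variables and $|\mathcal{C}_{\mathcal{L}}|+|\mathcal{C}_{\mathcal{L}'}|+O(1)=O(|\mathcal{T}|)$ constraints (each path has length at most the depth of $\mathcal{T}$), and solves it with a polynomial-time LP solver. The precomputation of path constraints by \textsc{Build-Leaf-Constraints} is a linear traversal of $\mathcal{T}$. Putting the pieces together yields the stated polynomial bound in $n$ and $|\mathcal{T}|$.
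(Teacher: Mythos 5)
Your proof is correct and follows essentially the same line of reasoning as the paper's own discussion preceding the theorem: soundness via the fact that inflated constraints imply the original path constraints, completeness via the homogeneity/scaling argument to normalize both the cost improvement and the strict-inequality slacks to at least $1$, and the runtime from the nested loop count and polynomial-time LP solving. You actually spell out both directions of correctness more explicitly than the paper does, which only gives an informal account, but the key ideas are identical.
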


Algorithm 1 
essentially carries out a brute force search over pairs of leaves to find a profitable manipulation. Under the decision tree representation, is it possible to verify truthfulness much more efficiently? Our next result 
answers this question in the negative. The proof is included in the appendix, together with all the other proofs omitted from the main text.

\begin{theorem}
\label{thm:step2lb}
Let $N=\{1,\ldots,n\}$ with $n\geq 2$, and $\ell\leq n!$.
Then any algorithm that verifies truthfulness for every deterministic decision tree with $\ell$ leaves for $n$ agents must inspect all the leaves in the worst case. 
\end{theorem}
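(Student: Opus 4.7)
The natural strategy is an adversarial indistinguishability argument: for every $\ell \le n!$ I will exhibit a decision tree $\mathcal{T}$ on $n$ agents with exactly $\ell$ leaves such that (i) $\mathcal{T}$ represents a truthful mechanism, and (ii) for every leaf $\mathcal{L}$ of $\mathcal{T}$ there is an ``alternative'' tree $\mathcal{T}^{\mathcal{L}}$ that differs from $\mathcal{T}$ only in the convex combination stored at $\mathcal{L}$ and that represents a \emph{non}-truthful mechanism. Given this, suppose some correct verifier $\mathcal{A}$ terminates on $\mathcal{T}$ without inspecting some leaf $\mathcal{L}_*$. Then its execution trace on $\mathcal{T}$ and on $\mathcal{T}^{\mathcal{L}_*}$ is identical (the two trees agree on the tree topology, on every internal-node comparison, and on every leaf that $\mathcal{A}$ actually reads), so $\mathcal{A}$ must output the same verdict on both; since exactly one of them is truthful, $\mathcal{A}$ errs, contradicting correctness.

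\textbf{Baseline tree.} For the extremal case $\ell = n!$ I would take $\mathcal{T}$ to be the comparison-based sorting tree for $n$ reports: every root-to-leaf path resolves the permutation $\sigma$ that sorts $\vx$, and the leaf for $\sigma$ is labeled with $x_{\sigma(\lceil n/2 \rceil)}$. This realises the classical median mechanism, which is known to be truthful and has exactly $n!$ leaves. For smaller $\ell \le n!$, obtain $\mathcal{T}$ by iteratively merging subtrees of the sort tree whose leaves happen to agree on the identity of the median (so the merged leaf can keep that median label, and the overall mechanism remains the median and hence truthful). Since merges can be done one at a time down to the trivial single-leaf dictator, every $\ell \in \{1,\ldots,n!\}$ is attainable.

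\textbf{Essentiality of each leaf.} Let $\mathcal{L}$ be a leaf with non-empty region $R_{\mathcal{L}} \subseteq \mathbb{R}^n$ of profiles routed to it, and let its baseline label be the median $x_{i^*}$ on $R_{\mathcal{L}}$. Replace this label by a convex combination that differs from $x_{i^*}$ on some $\vx \in R_{\mathcal{L}}$, e.g. the average $\tfrac{1}{n}\sum_j x_j$. Pick a well-separated profile $\vx \in R_{\mathcal{L}}$; the facility under $\mathcal{T}^{\mathcal{L}}$ lies at this average $y^*$ rather than at $x_{i^*}$. Then the agent whose true peak equals $x_{i^*}$ can misreport a value $x_k'$ that crosses one of the path-comparisons defining $R_{\mathcal{L}}$, so the execution is diverted to a neighbouring leaf whose label is still the correct median of the misreported profile. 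A short coordinate calculation shows this new facility is strictly closer to $x_k$ than $y^*$ is, so the deviation is profitable and $\mathcal{T}^{\mathcal{L}}$ is not truthful.

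\textbf{Main obstacle.} The indistinguishability step is clean; the delicate part is handling intermediate $\ell$ uniformly. After merging, a single leaf can be reached by profiles with several distinct orderings, and one has to check both that the substituted label genuinely fails truthfulness somewhere in $R_{\mathcal{L}}$ and that the diverting manipulation lands in a leaf that survived the merging and still outputs a true median. Verifying this at every leaf is a routine but somewhat tedious case analysis, driven by picking the baseline convex combinations so that adjacent regions give the manipulator a strictly preferred alternative.
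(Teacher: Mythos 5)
Your overall indistinguishability framework is exactly the right idea and matches the paper's: exhibit a truthful tree $\mathcal{T}$ with $\ell$ leaves for which every leaf is ``essential,'' then argue a verifier skipping a leaf cannot distinguish $\mathcal{T}$ from a non-truthful sibling $\mathcal{T}'$ differing only there. But your instantiation is substantially harder than necessary and has gaps at exactly the points you flag as the ``main obstacle.'' The paper avoids all of this by a much simpler baseline: take $\mathcal{T}$ to be \emph{any} tree whose internal nodes compare $x_i < x_j$ for $i < j$ and whose every leaf outputs $x_1$. This is the dictatorship of agent~$1$, which is trivially truthful regardless of tree topology, so you can build a tree with exactly $\ell$ leaves for any $\ell \le n!$ with no merging argument at all; in particular the merge-down-to-$\ell$ construction you sketch is both unjustified (it needs an argument that mergeable sibling pairs always exist at every stage) and internally inconsistent (a single leaf cannot represent the median for $n\ge 2$, so ``down to the trivial single-leaf dictator'' does not follow from median-preserving merges).

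The second gap is the manipulation. You divert execution to a \emph{neighbouring leaf} by crossing a path constraint and then must argue the new leaf's median is strictly closer to the manipulator than the corrupted leaf's average; as you note, this requires a delicate case analysis, and it is not obviously true for all adjacent leaves (the new median could be far on the wrong side). The paper's manipulation is within the \emph{same} leaf: since only strict inequalities appear, the region $R_{\mathcal{L}}$ is open, and for a profile $\vx$ in it with $x_{\pi_n}$ maximal, agent $\pi_n$ can perturb to $x_{\pi_n}+\delta$ for small $\delta>0$ without leaving $\mathcal{L}$; the corrupted label $\frac{1}{n}\sum_i x_i$ then moves toward $x_{\pi_n}$, a strict gain. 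This single-leaf perturbation sidesteps your entire adjacency analysis. So the proof-strategy skeleton is correct, but to close your version you would still need to (i) prove the merging claim for arbitrary $\ell$, and (ii) complete the neighbour-hopping cost comparison; the paper's choice of the constant dictatorship removes both obligations.
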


Crucially, our decision trees are binary trees, so the number of leaves is exactly the number of internal nodes plus one. Theorem~\ref{thm:step2lb} therefore implies:

\begin{corollary}
Let $N=\{1,\ldots,n\}$, $n\geq 2$. Any verification algorithm requires superpolynomial time in $n$ (in the worst-case) to verify the truthfulness of trees of superpolynomial size in $n$. 
\end{corollary}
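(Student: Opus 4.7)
The plan is to reduce directly to Theorem~\ref{thm:step2lb}: that theorem already guarantees that any truthfulness verifier must touch every leaf in the worst case whenever the leaf count $\ell$ satisfies $\ell \leq n!$. All that remains is to translate the stated hypothesis (``trees of superpolynomial size'') into a statement about $\ell$, and then to choose parameters so that $\ell$ is simultaneously superpolynomial and at most $n!$.

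First I would invoke the binarity observation that the excerpt highlights just before the corollary: every internal node of $\mathcal{T}$ has exactly two outgoing edges (the true/false branches of a comparison), so a tree with $\ell$ leaves has exactly $\ell-1$ internal nodes and therefore $|\mathcal{T}| = 2\ell - 1$ nodes in total. In particular $\ell = (|\mathcal{T}|+1)/2$, so the number of leaves is superpolynomial in $n$ precisely when the tree size is.

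Now fix any superpolynomial function $s(n)$; if necessary, replace $s(n)$ by $\min\{s(n), \, 2\cdot n! - 1\}$, which is still superpolynomial because $n!$ is superpolynomial (by Stirling). Consider a family of trees $\{\mathcal{T}_n\}$ with $|\mathcal{T}_n| = s(n)$; the leaf count $\ell_n = (s(n)+1)/2$ is then simultaneously superpolynomial in $n$ and at most $n!$. Applying Theorem~\ref{thm:step2lb} to $\mathcal{T}_n$, any verification algorithm must, in the worst case, inspect all $\ell_n$ leaves and hence use time $\Omega(\ell_n)$, which is superpolynomial in $n$. The only conceivable obstacle is whether worst-case trees of the requisite shape actually exist for each relevant $\ell$, but this is exactly what the quantification ``for every deterministic decision tree with $\ell$ leaves for $n$ agents'' in Theorem~\ref{thm:step2lb} already provides, so no additional construction is needed.
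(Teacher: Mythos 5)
Your argument is correct and takes essentially the same route the paper does: the paper leaves the corollary's proof implicit, simply pointing out that binary trees with $\ell$ leaves have $\ell-1$ internal nodes so size and leaf count are equivalent up to a factor of $2$, after which Theorem~\ref{thm:step2lb} gives the lower bound directly. Your additional care in capping the leaf count at $n!$ (and the observation that a minimum of two superpolynomial functions is still superpolynomial) correctly handles the range restriction $\ell \le n!$ in the theorem, which the paper glosses over; the only small loose end is that a tree of size $s(n)$ must have $s(n)$ odd, so strictly one should take the largest odd integer not exceeding $\min\{s(n),\,2\cdot n!-1\}$, but this is a trivial adjustment that changes nothing.
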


\subsection{Randomized Mechanisms}

In the context of randomized mechanisms, there are two common options for defining truthfulness: \emph{truthfulness in expectation} and \emph{universal truthfulness}. In our context, truthfulness in expectation means that an agent cannot decrease its expected distance to the facility by deviating; universal truthfulness means that the randomized mechanism is a probability distribution over truthful deterministic mechanisms, i.e., an agent cannot benefit from manipulation regardless of the mechanism's random coin tosses. Clearly, the former notion of truthfulness is weaker than the latter. In some settings, truthful-in-expectation mechanisms are known to achieve guarantees that cannot be obtained through universally truthful mechanisms~\cite{DD13}.

We focus on universal truthfulness, in part because we do not know whether truthful-in-expectation mechanisms can be efficiently verified (as we discuss in \S\ref{sec:disc}).
Using Theorem~\ref{thm:step2ub}, universal truthfulness is easy to verify, because it is sufficient and necessary to verify the truthfulness of each of the decision trees in the mechanism's support. One subtlety is the binding of agents in $N\setminus N_r$ to the $z_{r,j}$ parameters. However, for the purpose of verifying truthfulness, any binding will do by symmetry between the agents in $N\setminus N_r$. We therefore have the following result:

\begin{theorem}
Let $N=\{1,\ldots,n\}$. The universal truthfulness of a randomized mechanism $\mathcal{M}$ represented as a distribution over $K$ decision trees $\mathcal{T}_1,\ldots,\mathcal{T}_K$ can be verified in polynomial time in $n$ and its representation size, $\sum_{r=1}^K|\mathcal{T}_r|$. 
\end{theorem}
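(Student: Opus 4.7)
The plan is to reduce the randomized verification problem directly to $K$ invocations of the deterministic verification algorithm from Theorem~\ref{thm:step2ub}, using a symmetry argument to sidestep the combinatorial blow-up that would arise from iterating over all possible bindings of the parameters $z_{r,j}$ to actual agents.

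First, I would restate the semantic observation that underlies universal truthfulness: a distribution over deterministic mechanisms is universally truthful exactly when every deterministic mechanism that arises from some realization of the random coin tosses is truthful. In our representation, the coin tosses (i) pick an index $r\in\{1,\dots,K\}$ with probability $p_r$, and then (ii) pick a binding of $(z_{r,1},\dots,z_{r,m_r})$ to an ordered tuple of $m_r$ distinct agents in $N\setminus N_r$. So universal truthfulness is equivalent to: for every $r$ and every admissible binding $\beta$, the deterministic decision tree $\mathcal{T}_r^\beta$ obtained by substituting each $z_{r,j}$ with $x_{\beta(j)}$ is truthful in the sense of Theorem~\ref{thm:step2ub}.

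The key step is the symmetry claim: for a fixed $r$, the deterministic mechanism $\mathcal{T}_r^\beta$ is truthful for one binding $\beta$ if and only if it is truthful for every binding. This holds because truthfulness quantifies over all location profiles in $\mathbb{R}^n$, and any two bindings $\beta,\beta'$ differ only by a permutation $\pi$ of the agent indices used in the placeholder slots; if agent $k$ had a profitable manipulation at profile $\vx$ against $\mathcal{T}_r^\beta$, then the permuted agent $\pi(k)$ would have the analogous profitable manipulation at the permuted profile against $\mathcal{T}_r^{\beta'}$. Because the cost function $\cost(x_i,y)=|x_i-y|$ and the comparison-based internal nodes and convex-combination leaves treat the substituted slots symmetrically, the argument is a straightforward relabeling and does not depend on the specific support of the parameter distribution.

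Given the symmetry claim, the verification algorithm is: for each $r=1,\dots,K$, pick an arbitrary admissible binding $\beta_r$ (any fixed injection from $\{1,\dots,m_r\}$ into $N\setminus N_r$, which exists because $m_r\leq |N\setminus N_r|$), form the deterministic tree $\mathcal{T}_r^{\beta_r}$, and run \textsc{Truthful} from Theorem~\ref{thm:step2ub} on it; accept iff all $K$ runs accept. Since $|\mathcal{T}_r^{\beta_r}|=|\mathcal{T}_r|$ and each invocation runs in time polynomial in $n$ and $|\mathcal{T}_r|$, the total running time is polynomial in $n$ and $\sum_{r=1}^K|\mathcal{T}_r|$, as claimed. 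The main obstacle is stating and justifying the symmetry claim cleanly; everything else is bookkeeping on top of Theorem~\ref{thm:step2ub}.
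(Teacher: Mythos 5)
Your proposal is correct and follows essentially the same route as the paper: reduce universal truthfulness to truthfulness of each $\mathcal{T}_r$ in the support, observe that by symmetry among agents in $N\setminus N_r$ any single binding of the $z_{r,j}$ parameters suffices to check, and invoke the deterministic verifier of Theorem~\ref{thm:step2ub} once per tree. The paper only asserts the symmetry claim in one line, whereas you spell out the relabeling argument (modulo the minor imprecision that two bindings are related by a permutation of $N\setminus N_r$, not necessarily one that fixes the used slots), but the structure and conclusion are identical.
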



\section{Step III: Measuring the Quality of Verifiably Truthful Mechanisms}
\label{sec:step3}

We have shown that the truthfulness of mechanisms represented by decision trees of polynomial size can be verified in polynomial time. This result is encouraging, but it is only truly meaningful if decision trees of polynomial size can describe mechanisms that provide good guarantees with respect to the quality of the solution. 

Like Procaccia and Tennenholtz~\cite{PT13ec}, and subsequent papers, we measure solution quality in the facility location domain via two measures. 
The \emph{social cost} of a facility location $y\in \mathcal{R}$ for a location profile $\vx\in\mathbb{R}^n$ is
$$
\soc(\vx,y) = \sum_{i=1}^{n} \cost(x_i, y),
$$ 
and the \emph{maximum cost} is
$$
\mac(\vx,y)=\max_{i\in N} \cost(x_i,y).
$$
We denote the optimal solutions with respect to the social cost and maximum cost by 
$\soc^{*}(\vx) = \min_{y \in \mathbb{R}} \sum_{i=1}^{n} \cost(x_i, y)$, and $\mac^{*}(\vx) = \min_{y \in \mathbb{R}} \max_{i \in N} \cost(x_i, y)$, respectively.


\subsection{Deterministic Mechanisms}

Let us first review what can be done with deterministic mechanisms represented by decision trees of arbitrary size, without necessarily worrying about verification. 

For the maximum cost, the optimal solution is clearly the midpoint between the leftmost and rightmost reported locations. It is interesting to note that the midpoint may not be one of the agents' reported locations --- so, to compute the optimal solution, our expressive representation of the leaves as convex combinations of points in $\vx$ is required. Procaccia and Tennenholtz~\cite{PT13ec} have shown that any truthful mechanism cannot achieve an approximation ratio smaller than $2$ for the maximum cost. A ratio of $2$ is achieved by any solution that places the facility between the leftmost and rightmost reported locations. It follows that the optimal ratio is trivial to obtain truthfully, e.g., by always selecting the location $x_1$ reported by agent $1$. This mechanism is representable via a tiny decision tree with one leaf. 

We conclude that, in the context of deterministic mechanisms and the maximum cost objective, truthful mechanisms that are efficiently verifiable can do just as well as any truthful mechanism. 

Let us therefore focus on the social cost. For any number of agents $n$, it is easy to see that selecting the median of the reported locations is the optimal solution. Indeed, if the facility moves right or left, the facility would get further away from a majority of locations, and closer to a minority of locations. The median mechanism was observed by Moulin~\cite{Moul80} to be truthful. Intuitively, this is because the only way an agent can manipulate the median's location is by reporting a location that is on the other side of the median --- but that only pushes the median away from the agent's actual location. 
Moreover, the median can be computed by a decision tree in which every internal node contains comparisons between the input locations, and each leaf $\mathcal{L}$ outputs the location of the facility (the median)
when $\mathcal{L}$ is reached.

In contrast to the maximum cost, though, the optimal mechanism for the social cost --- the median --- requires a huge decision tree representation. The number of comparisons required to compute the median has been formally studied (see, e.g., Blum et al.~\cite{BFPR+73}), but, in our case, simple intuition suffices: if there is an odd number of agents with distinct locations, the median cannot be determined when nothing is known about the location of one of the agents, so $(n-1)/2$ comparisons are required \emph{in the best case}, leading to a tall binary tree of exponential size. 

%
%
%

Our next result strengthens this insight by giving a lower bound on the approximation ratio achievable by polynomial size decision trees (i.e., trees efficiently verifiable by our algorithm of \S\ref{sec:step2}).

\begin{theorem}
\label{thm:step3lb}
For every constant $k \in \mathbb{N}$, every truthful deterministic decision tree for $n$ agents of size
at most $n^k$ has an approximation ratio of $\Omega\left(\frac{n}{\log{n}}\right)$ for the social cost.
\end{theorem}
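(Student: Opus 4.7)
The plan is to combine a pigeonhole argument on a carefully chosen family of hard profiles with truthfulness-driven structural constraints on the leaves of $\mathcal{T}$.

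Setup. Fix a truthful deterministic mechanism $\mathcal{M}$ represented by a tree $\mathcal{T}$ with at most $n^k$ leaves. Each leaf $\mathcal{L}$ defines (i) a cell in $\mathbb{R}^n$ carved out by the comparison constraints $\mathcal{C}(\mathcal{L})$ on the root-to-leaf path, and (ii) a fixed linear output $y_\mathcal{L}(\vx)=\sum_i\lambda_{\mathcal{L},i}x_i$ with $\lambda_{\mathcal{L},i}\ge 0$ and $\sum_i\lambda_{\mathcal{L},i}=1$. Because the path has only $O(\log(n^k))=O(\log n)$ comparisons on average (or $O(n^k)$ at worst) but the outputs are drawn from at most $n^k$ linear forms, the mechanism has very limited ``expressive ordinal information'' to work with, which is exactly what medians require.

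Hard profiles and collision. My first concrete step is to design a superpolynomially large family of profiles all having essentially the same optimal cost, and then pigeonhole them into a single leaf. A natural choice is the family of binary profiles $\vx\in\{0,M\}^n$ for a large constant $M$ (or, alternatively, the $n!$ permutation profiles $x_{\sigma(i)}=i$). With $2^n$ binary profiles and at most $n^k$ leaves, some leaf $\mathcal{L}^\star$ attracts at least $2^n/n^k$ of them; on any such profile the mechanism outputs the same linear form $y_{\mathcal{L}^\star}(\vx)=M\sum_{i\in T}\lambda_{\mathcal{L}^\star,i}$, where $T=\{i:x_i=M\}$. The optimal median, however, jumps from $0$ to $M$ depending on whether $|T|\lessgtr n/2$, and the social cost grows linearly in the distance to the majority side (e.g.\ a slight-majority-at-$0$ profile with $(n+s)/2$ zeros and $(n-s)/2$ Ms yields ratio $1+\frac{2s}{n-s}\cdot\sum_{i\in T}\lambda_{\mathcal{L}^\star,i}$). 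By looking at how $|T|$ must vary inside the collision class, I can force the leaf's output to be badly off on at least one member.

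Truthfulness-based structural lemma. The crux is a structural lemma that uses truthfulness to pin down the weights $\lambda_{\mathcal{L},i}$ for leaves of $\mathcal{T}$: along a given path with $d$ comparisons, the ordinal rank of at least $n-O(d)$ agents is undetermined, so deviations within those undetermined ranks must not let any agent decrease its cost; combining this with the Moulin-type characterization of truthful mechanisms for single-peaked preferences, I would show that a leaf reached by profiles of many different medians is forced either to put almost all weight on a small ``decided'' group (making it a near-dictatorship for the other profiles in the cell) or to spread weight widely (making it close to an average for outlier-style profiles). Either extreme produces an $\Omega(n/\log n)$ loss: a $\log n$ factor is lost because a polynomial tree can commit to at most $O(\log n)$ comparisons' worth of ordinal information on any root-to-leaf path, while the remaining $\Omega(n/\log n)$ ``undecided'' agents contribute a social-cost gap of the same order when the mechanism's linear output disagrees with the true median.

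Main obstacle. The delicate step is the structural lemma of the last paragraph, because the convex-combination leaves allow much more flexibility than peaks-only mechanisms, and one must rule out clever ``averaging tricks'' that hedge against many orderings within one cell. I expect to handle this by analyzing pairs of profiles in the same leaf that differ in one coordinate (so truthfulness directly constrains the corresponding $\lambda_{\mathcal{L},i}$) and bootstrapping from such pairwise constraints to a global bound on how much a single linear $y_{\mathcal{L}}$ can simultaneously approximate all medians consistent with $\mathcal{C}(\mathcal{L})$. Getting the right $\log n$ dependence—matched by the $O(\log n)$ path depth inherent in $n^k$ leaves—will be the main technical balancing act.
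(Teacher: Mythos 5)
Your plan is a genuine departure from the paper's proof, and the key step you flag as delicate is precisely where the argument has a gap.

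The paper's proof does \emph{not} use a pigeonhole argument over a large family of profiles. Instead, it observes that a binary tree with at most $n^k$ leaves must contain at least one \emph{shallow} leaf $\mathcal{L}$ of depth $d<2k\log n$, so the set $S_{\mathcal{L}}$ of agents whose coordinates are compared on the root-to-$\mathcal{L}$ path has $|S_{\mathcal{L}}|\leq 4k\log n$. It then splits on the form of $y_{\mathcal{L}}$: (i) $y_{\mathcal{L}}$ depends only on agents in $S_{\mathcal{L}}$, (ii) $y_{\mathcal{L}}=x_t$ for a single uninspected $t\notin S_{\mathcal{L}}$, or (iii) $y_{\mathcal{L}}$ is a non-degenerate convex combination that puts some but not all weight on an uninspected $t$. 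Case (iii) is eliminated by a short local argument: since $t$ is not inspected on the path, small perturbations of $x_t$ keep the input at the same leaf, and an agent $t$ located far from the facility can shift its report slightly to drag the facility strictly toward its true peak, contradicting truthfulness. This is \emph{not} a Moulin-style global characterization; it is a one-line deviation argument, and it is exactly the clean version of the ``structural lemma'' you leave vague. Once (iii) is ruled out, the bad instance is explicit: place the $O(\log n)$ inspected agents near $0$ in an order consistent with $\mathcal{C}(\mathcal{L})$, place all remaining agents at $1$ (and $x_t=0$ in case (ii)), and note the mechanism's output stays near $0$ while the optimum is $1$, giving ratio $\Omega(n/\log n)$.

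Concrete problems with your route. First, your pigeonhole over $\{0,M\}^n$ does not straightforwardly produce a bad instance: profiles in the same collision class all use the same \emph{linear form} but yield \emph{different outputs} $M\sum_{i\in T}\lambda_{\mathcal{L},i}$ depending on $T$, and you have no control over which sets $T$ actually land in the cell of a given leaf. The constraints $\mathcal{C}(\mathcal{L})$ on binary profiles (with massive ties) carve up $\{0,M\}^n$ in a complicated way, so ``looking at how $|T|$ must vary inside the collision class'' is not something you can do without more structure. Second, your claim that the path has ``$O(\log n)$ comparisons on average'' is backwards: in a binary tree with $L$ leaves the \emph{minimum} leaf depth is at most $\log_2 L$, but the average can be as large as $\Theta(L)$; what is needed (and what the paper uses) is the existence of one shallow leaf, not an average-depth bound. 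Third, invoking Moulin's characterization is the wrong tool here, since it constrains the global mechanism rather than a single leaf; the leaf-local deviation argument is both simpler and what actually closes Case (iii). Your dichotomy (``near-dictatorship on a small decided group'' vs.\ ``spread weight widely'') is also not the right one: a truthful leaf can be a dictatorship of a single \emph{uninspected} agent (Case (ii)), and must otherwise put \emph{all} weight on the inspected set, not ``almost all.''

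What your framing does buy you is a correct high-level intuition: $O(\log n)$ worth of ordinal information cannot locate a median among $n$ agents. But to turn this into a proof you should pivot from the pigeonhole/collision picture to the shallow-leaf picture, prove the three-way dichotomy above, and rule out the mixed case by the local truthfulness deviation, rather than hope a Moulin-style characterization and a ``bootstrapping'' argument can be made to work.
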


On the positive side, we show that the lower bound of Theorem~\ref{thm:step3lb} is asymptotically tight. 

\begin{theorem} \label{thm:det_apx}
For every $n\in\mathbb{N}$ there is a truthful deterministic decision tree of size  $O(n^6)$ that approximates
the social cost within a factor of $O\left(\frac{n}{\log(n)}\right)$.
\end{theorem}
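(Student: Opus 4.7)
The plan is to use the $k$-partial median mechanism $\mathcal{M}_S$: fix a subset $S \subseteq N$ of size $k$ (with $k$ odd) and output the median of $\{x_i : i \in S\}$. I will take $k = c \log n$ for a sufficiently small constant $c > 0$, so that $k = \Theta(\log n)$. This is representable in the model of \S\ref{sec:step1}, since all internal comparisons are among $\{x_i : i \in S\}$ and each leaf outputs a single $x_i$ (a degenerate convex combination). Truthfulness is immediate: any agent $i \notin S$ cannot affect the output and thus has no incentive to misreport; any agent $i \in S$ faces the pure median mechanism among the members of $S$, which is truthful by Moulin's classical result since misreporting can only push the median away from $x_i$.

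To bound the tree size, I would invoke a comparison-efficient median-selection algorithm (e.g., the classical Blum, Floyd, Pratt, Rivest and Tarjan algorithm) that computes the median of $k$ elements using at most $Ck$ comparisons for some constant $C$. The corresponding decision tree has depth at most $Ck$ and hence at most $2^{Ck+1}$ nodes. Choosing $c \leq 6/C$ makes the tree size $O(n^6)$.

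For the approximation ratio, let $\tilde{y} = \mathcal{M}_S(\vx)$ and let $y^*$ be the true median of $\vx$; WLOG assume $\tilde{y} \geq y^*$. The key observation is that since $\tilde{y}$ is the median of $S$ (which has odd size $k$), at least $(k+1)/2$ indices $i \in S \subseteq N$ satisfy $x_i \geq \tilde{y}$, and hence
\[
\soc^*(\vx) \;\geq\; \sum_{i:\, x_i \geq \tilde{y}} (x_i - y^*) \;\geq\; \frac{k}{2}(\tilde{y} - y^*).
\]
Combining this with the triangle-inequality bound $\soc(\vx, \tilde{y}) \leq \soc^*(\vx) + n(\tilde{y} - y^*)$ gives $\soc(\vx, \tilde{y}) / \soc^*(\vx) \leq 1 + 2n/k = O(n / \log n)$.

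The main obstacle is the tension between tree size and approximation quality. A naive approach that fully sorts the elements of $S$ uses $k!$ leaves, which allows only $k = \Theta(\log n / \log \log n)$ within the $O(n^6)$ budget and yields the weaker bound $O(n \log \log n / \log n)$. Achieving the tight $O(n/\log n)$ rate hinges on the fact that computing the median (rather than the full sorted order) requires only $O(k)$ comparisons, so that $k = \Theta(\log n)$ becomes feasible within a polynomial-size decision tree.
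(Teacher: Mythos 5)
Your proof is correct and follows essentially the same approach as the paper: both use the mechanism that outputs the median of a fixed subset of $k = \Theta(\log n)$ agents, bound the decision-tree size via the Blum--Floyd--Pratt--Rivest--Tarjan linear-comparison median algorithm, prove truthfulness via Moulin, and lower-bound $\soc^*$ using the observation that at least $k/2$ members of the subset lie on the far side of the partial median from the true median. Your final bookkeeping, via the triangle inequality $\soc(\vx,\tilde y)\le \soc^*(\vx)+n|\tilde y - y^*|$ giving $1+2n/k$, is slightly cleaner than the paper's three-region decomposition yielding $3(n-k)/k$, but both produce the same $O(n/\log n)$ bound.
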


In summary, polynomial-size decision trees can achieve the best possible approximation ratio
(among all truthful deterministic mechanisms) with respect to the maximum cost objective and an approximation ratio of $\Theta(n/\log n)$ with respect to the social cost.

\subsection{Randomized Mechanisms}

We next turn to randomized mechanisms. In this context, we are interested in the expected social cost, or the expected maximum cost. The latter measure is somewhat subtle, so let us state specifically that, like Procaccia and Tennenholtz~\cite{PT13ec}, we are interested in 
$$
\mathbb{E}\Big[\mac(\vx,\mathcal{M}(\vx))\Big]=\mathbb{E}\left[\max_{i\in N}\cost(x_i,\mathcal{M}(\vx))\right].
$$
A less stringent alternative would be to take the maximum over agents of the agent's expected cost.  

It is immediately apparent that universally truthful, randomized, small decision trees can easily beat the lower bound of Theorem~\ref{thm:step3lb} for social cost. To see this, consider the random dictator mechanism, that selects an agent $i\in N$ uniformly at random, and returns the location $x_i$. This mechanism is clearly universally truthful (it is a uniform distribution over dictatorships), and it is easy to verify that its approximation ratio is $2-2/n$. 

Our next theorem, which we view as the main result of this section, shows that randomization allows us to get arbitrarily close to $1$ using universally truthful, efficiently-verifiable mechanisms.

\begin{theorem}
\label{thm:step3rand}
For every $0 < \epsilon < \frac{1}{10}$ and $n\in \mathbb{N}$, 
there exists a universally truthful randomized decision tree of polynomial size in $n$ that
approximates
the social cost to a factor of $1+ \epsilon$.
\end{theorem}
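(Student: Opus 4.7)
The plan is to use the mechanism $\mathcal{M}_\epsilon$ that samples a uniformly random subset $S \subseteq N$ of size $k := \lceil C \log n / \epsilon^2 \rceil$ (for a constant $C$ chosen below) and outputs the median of $\{x_i : i \in S\}$. In the representation of \S\ref{sec:step1}, $\mathcal{M}_\epsilon$ has a single chance branch of probability $1$ leading to a tree $\mathcal{T}_1$ with $N_1 = \emptyset$ and $k$ parameters $z_1,\ldots,z_k$ distributed uniformly over $k$-tuples of distinct agents. The crucial point is that $\mathcal{T}_1$ need not be the ``branch on every ordering'' tree of $k!$ leaves used for $k=3$ in Figure~\ref{fig:Median3}: unrolling a comparison-based linear-time selection algorithm (e.g.\ Blum--Floyd--Pratt--Rivest--Tarjan) yields a decision tree of depth $O(k)$ and hence total size $2^{O(k)} = n^{O(1/\epsilon^2)}$, which is polynomial in $n$ for constant $\epsilon$.

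Universal truthfulness is immediate: for every binding of the parameters to a subset $S$, the resulting deterministic tree outputs the median of $\{x_i : i \in S\}$ and ignores all other agents. This is truthful for agents in $S$ by the result of Moulin~\cite{Moul80}, and agents outside $S$ cannot influence the outcome.

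For the approximation ratio, let $y^*$ be the (true) median and $y$ the output of $\mathcal{M}_\epsilon$; assume WLOG $y \geq y^*$, and let $R = |\{i : x_i > y\}|$. The identity $\soc(\vx, y) - \soc(\vx, y^*) = \int_{y^*}^{y}[|\{i : x_i < t\}| - |\{i : x_i > t\}|]\,dt$ yields the bound $(n - 2R)(y - y^*)$ on the extra cost, while each agent with $x_i > y$ already contributes at least $(y - y^*)$ to $\soc(\vx, y^*)$, so $\soc(\vx, y^*) \geq R(y - y^*)$. A standard Chernoff bound for sampling without replacement shows that with probability at least $1 - 2 e^{-\Omega(\epsilon^2 k)}$, the rank of $y$ in the full sorted order lies in $[(\tfrac12 - \epsilon)n,(\tfrac12 + \epsilon)n]$, so on this ``good'' event $n - 2R \leq 2\epsilon n$ and $R \geq (\tfrac12 - \epsilon)n$, which makes the ratio at most $\frac{2\epsilon n}{(1/2 - \epsilon) n} = O(\epsilon)$.

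The main obstacle, and the step that fixes the choice of $k$, is controlling the ``bad'' event, whose tiny probability could in principle be amplified by a factor of $n$. The rescue is the crude unconditional bound $|y - y^*| \leq \max_i x_i - y^* \leq \soc(\vx, y^*)$, together with the trivial extra-cost bound $n \cdot |y - y^*|$: the bad event contributes at most $2 n e^{-\Omega(\epsilon^2 k)} \cdot \soc(\vx, y^*)$ to the expectation. Picking $C$ large enough in $k = C\log n/\epsilon^2$ drives this below $\epsilon \cdot \soc(\vx, y^*)$, and adding the two contributions (then rescaling $\epsilon$ by a constant factor) produces the desired $(1+\epsilon)$-approximation. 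The heart of the argument is precisely this tension: $k$ must be $\Omega(\log n/\epsilon^2)$ for the Chernoff tail to dominate the $n$-factor blowup of a bad sample, and it is the $2^{O(k)}$-size linear-time median tree rather than a permutation-based one that makes this value of $k$ compatible with polynomial size.
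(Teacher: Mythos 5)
Your proposal is correct and takes essentially the same approach as the paper: sample $k = \Theta(\log n/\epsilon^2)$ agents, output the median of the sample via an unrolled comparison-based selection algorithm of size $2^{O(k)}$, use rank concentration for sampling without replacement (the paper's Lemma~\ref{lem:sampling}) to control the good event, and handle the bad event by an unconditional crude bound on the cost relative to $\soc^*(\vx)$. The only cosmetic difference is that the paper bounds the bad-event contribution by $\soc(\vx,\mathcal{M}(\vx)) \le (n-1)\diam(\vx) \le (n-1)\soc^*(\vx)$ and sets $\delta = \epsilon/(2n)$, whereas you use the triangle-inequality bound $\soc(\vx,y)-\soc(\vx,y^*)\le n\,|y-y^*| \le n\cdot\soc^*(\vx)$; these are equivalent in effect.
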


In stark contrast, universal truthfulness does not help obtain a better bound than the trivial approximation ratio of $2$ for the maximum cost --- even in the case of general mechanisms.

\begin{theorem}
\label{thm:jeefa}
For each $\epsilon > 0$, there exists no universally truthful mechanism given by a distribution over countably many deterministic mechanisms that can approximate the maximum cost within a factor of $2 - \epsilon$.
\end{theorem}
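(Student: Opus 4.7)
The plan is to reduce to $n=2$ agents and exhibit, for any universally truthful $\mathcal{M}$ written as a distribution over countably many deterministic truthful mechanisms $\{f_j\}_{j \geq 1}$ with probabilities $\{p_j\}_{j \geq 1}$, an instance $I_s = (s, s+1)$ on which the expected maximum cost of $\mathcal{M}$ is close to $1$ while $\mac^*(I_s) = 1/2$. The reduction to two agents is standard: placing the remaining $n-2$ agents coincident with one of the two extreme reported locations leaves both the max cost and the optimum unchanged.

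Fix $\epsilon > 0$ and suppose for contradiction that $\mathcal{M}$ achieves approximation ratio at most $2-\epsilon$. For small $\delta > 0$ to be chosen later, define
\[
B_f(\delta) = \{s \in \mathbb{R} : f(s, s+1) \in (s+\delta, s+1-\delta)\}.
\]
A direct check shows that whenever $s \notin B_f(\delta)$, the max cost of $f$ on $I_s$ is at least $1-\delta$. The key lemma is that the Lebesgue measure $|B_f(\delta)|$ is bounded by a universal constant $C(\delta)$ independent of the truthful mechanism $f$. Intuitively, truthful $2$-agent mechanisms on the line are generalized median voter schemes (in the sense of Moulin), and $f(s, s+1)$ can lie strictly in the interior $(s+\delta, s+1-\delta)$ only when a ``phantom'' of $f$ lies inside $[s, s+1]$; since the number of phantoms is bounded, $B_f(\delta)$ is a union of a few intervals of length at most $1 - 2\delta$.

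Granted the lemma, Fubini's theorem gives $\int_{\mathbb{R}} \Pr_{j \sim p}[s \in B_{f_j}(\delta)]\, ds = \sum_j p_j |B_{f_j}(\delta)| \leq C(\delta)$, so for any $\eta > 0$ there exists $s^\star \in \mathbb{R}$ with $\Pr_{j \sim p}[s^\star \in B_{f_j}(\delta)] \leq \eta$ (otherwise the integrand would exceed $\eta$ on a set of infinite measure, making the integral infinite). At $s^\star$, the expected max cost of $\mathcal{M}$ on $I_{s^\star}$ is at least $(1-\eta)(1-\delta) + \eta/2 \geq 1 - \delta - \eta/2$, yielding approximation ratio at least $2 - 2\delta - \eta$; choosing $\delta = \eta = \epsilon/4$ makes this strictly larger than $2-\epsilon$, the desired contradiction.

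The main obstacle is rigorously proving the key lemma for all truthful $f$, including non-anonymous and range-restricted ones for which a direct appeal to Moulin's characterization is not immediate. An elementary approach shows that if $s_1 < s_2$ both lie in $B_f(\delta)$, then combining agent 1's incentive constraint at location $s_1$ (comparing the reports $s_1$ and $s_2$) with agent 2's incentive constraint at location $s_1 + 1$ (comparing the reports $s_1+1$ and $s_2+1$) on the ``cross'' instance $(s_1, s_2+1)$ pins $f(s_1, s_2+1)$ to a value that contradicts the projection structure of $f(\cdot, s_2+1)$ once $s_2 - s_1$ exceeds a constant depending on $\delta$. This case analysis, which tracks the allowed ranges $R(s_i+1) = \{f(x_1, s_i+1) : x_1 \in \mathbb{R}\}$ and their symmetric counterparts $R'(s_i) = \{f(s_i, x_2) : x_2 \in \mathbb{R}\}$, is the delicate core of the proof.
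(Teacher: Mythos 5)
Your approach is genuinely different from the paper's. Both ultimately lean on Moulin's min-max characterization of strategyproof mechanisms, but the paper proceeds by extracting a finite, high-probability subset $\mathcal{S}_T$ of the support (total mass $> 1-\epsilon/2$, each element with mass $\geq 2^{-T}$), collecting the finitely many finite phantom constants $P$ appearing in these mechanisms, and then placing the adversarial instance $(a,\ldots,a,a+1)$ on a unit interval that is both disjoint from $P$ and farther than $2^T$ from it. Under Moulin's formula, every mechanism in $\mathcal{S}_T$ then outputs either an endpoint (max cost $1$) or a distant phantom (max cost $\geq 2^T$, weighted by probability $\geq 2^{-T}$), and the two cases combine to force a ratio $> 2-\epsilon$. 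Your Fubini/averaging argument is an elegant alternative that avoids having to ``escape'' the phantom set; modulo the key lemma and a routine check that each $B_{f_j}(\delta)$ is Lebesgue measurable, the remainder of your argument (choice of $s^\star$, setting $\delta = \eta = \epsilon/4$) is correct.

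The gap is that the key lemma --- $|B_f(\delta)| \leq C(\delta)$ uniformly over all truthful $f$ --- is only sketched; you yourself flag the cross-instance case analysis as ``the delicate core of the proof'' without carrying it out, so as written the argument is a conjecture rather than a proof. Two further remarks. First, your stated worry that Moulin's characterization does not immediately cover non-anonymous or range-restricted mechanisms is misplaced: the version the paper invokes allows arbitrary $a_S \in \mathbb{R} \cup \{\pm\infty\}$ for every $S \subseteq N$, which already covers dictatorships, constants, and all other non-anonymous or non-onto cases. Granting that characterization, the key lemma is a short computation: on the instance $(s,\ldots,s,s+1)$ the Moulin formula collapses to $\min\bigl(a_\emptyset, \max(s, A), \max(s+1, B)\bigr)$ for constants $A, B$ determined by the $a_S$, and the set of $s$ for which this value lies in $(s+\delta, s+1-\delta)$ is a single interval of length at most $1-2\delta$, so $C(\delta)=1$ suffices. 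Second, there is a subtler reason one might want to avoid Moulin --- his theorem is native to the full single-peaked preference domain, whereas the facility-location cost $|x_i - y|$ gives only symmetric single-peaked preferences, over which additional (e.g.\ discontinuous) strategyproof mechanisms exist --- and if that is your real concern, the elementary route you gesture at would be valuable, but it needs to be completed. Finally, state the lemma carefully: clamping the remaining $n-2$ agents at an extreme makes the effective two-agent map depend on $s$, so you should formulate and prove the measure bound directly for the one-parameter family $s \mapsto f(s, s+1, s,\ldots,s)$ rather than appealing to a clean two-agent statement.
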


We have the following corollary for universally truthful decision trees.

\begin{corollary}
For each $\epsilon > 0$, there exists no universally truthful decision tree mechanism given by a distribution over countably many deterministic decision trees that can approximate the maximum cost within a factor of $2 - \epsilon$.
\end{corollary}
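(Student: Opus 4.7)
The plan is to obtain the corollary essentially for free by specializing Theorem~\ref{thm:jeefa}. First I would observe that a deterministic decision tree, as specified in Section~\ref{sec:step1}, computes a deterministic mechanism in the sense of Theorem~\ref{thm:jeefa}: it is a function $\mathbb{R}^n \to \mathbb{R}$ that outputs, on each input profile, the convex combination of input locations stored at the reached leaf. Thus the class of deterministic decision tree mechanisms is a subclass of the class of deterministic mechanisms.

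Next, I would unpack the randomized decision tree representation from Section~\ref{sec:step1} to confirm that a universally truthful randomized decision tree mechanism, given as a (countable) distribution over decision trees $\mathcal{T}_1, \mathcal{T}_2, \dots$ with parameter bindings, is itself a distribution over countably many deterministic mechanisms. Concretely, after the chance node has selected an index $r$ and the parameters $z_{r,j}$ have been bound to agent locations from $N\setminus N_r$ according to the prescribed probability distribution, one obtains a deterministic decision tree, hence a deterministic mechanism. Since there are only countably many such $(r, \text{binding})$ pairs (indeed, finitely many bindings per $r$ and at most countably many $r$), the overall mechanism is a distribution over countably many deterministic mechanisms. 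Universal truthfulness of the decision tree mechanism coincides with the usual definition: no realization of the coin tosses admits a profitable manipulation.

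Finally, I would invoke Theorem~\ref{thm:jeefa} directly: if a universally truthful decision tree mechanism approximated the maximum cost within a factor of $2 - \epsilon$, then the associated distribution over deterministic mechanisms would be a universally truthful mechanism achieving the same approximation ratio, contradicting Theorem~\ref{thm:jeefa}. No additional work beyond this translation is required, so there is no genuine obstacle; the only item to be careful about is confirming that the randomized decision tree representation really does unfold into a distribution over countably many deterministic functions (not something more exotic like a continuous mixture), which it does by the construction of the chance node and the discrete parameter-binding distributions.
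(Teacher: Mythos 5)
Your proposal is correct and matches the paper's (implicit) reasoning: the corollary is stated without an explicit proof precisely because each deterministic decision tree is a deterministic mechanism in the sense of Theorem~\ref{thm:jeefa}, so the specialization is immediate. Your extra care in confirming that the randomized decision tree representation (chance node plus discrete parameter bindings) unfolds into a distribution over countably many deterministic mechanisms is a reasonable sanity check, though the corollary's hypothesis already packages the mechanism as such a distribution.
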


\section{Discussion}
\label{sec:disc}

Theorem~\ref{thm:jeefa} shows that universally truthful decision trees cannot achieve a nontrivial (better than $2$) approximation for the maximum cost. In contrast, Procaccia and Tennenholtz~\cite{PT13ec} designed a \emph{truthful-in-expectation} mechanism that approximates the maximum cost to a factor of $3/2$. This motivates the study of truthful-in-expectation randomized decision trees, as an alternative to universal truthfulness. However, we do not know whether truthfulness in expectation can be efficiently verified (and we believe that it cannot). Intuitively, the main difficulty is that, for every selection of one leaf from each tree in the support of the randomized mechanism, a na\"ive verification algorithm would need to reason about whether a certain location profile $\vx$ can reach this collection of leaves under the constraints imposed by the different trees. 

Our work focuses on the case of locating one facility on the line, which is quite simple from the approximate-mechanism-design-without-money viewpoint. Researchers have investigated approximate mechanism design in generalized facility location settings, involving multiple facilities~\cite{PT13ec,LWZ09,LSWZ10,NST12,FT10,FT13a,FT13b}, different cost functions~\cite{WF13,FT13b}, metric spaces and graphs~\cite{AFPT10,LSWZ10}, and so on. Of these generalizations and extensions, all but one only require a rethinking of our results of \S\ref{sec:step3} --- that is, mechanisms can still be represented as polynomial-size decision trees. But moving from the real line to a more general metric space requires a revision of the way mechanisms are represented in our framework. 

We conclude by re-emphasizing the main message of our paper. In our view, our main contribution is the three-step approach to the design of verifiably truthful mechanisms. Our technical results provide a proof of concept by instantiating this approach in the context of a well-studied facility location setting, and constructing verifiably truthful mechanisms that achieve good quality guarantees. We firmly believe, though, that the same approach is widely applicable. For example, is there a class of mechanisms for combinatorial auctions that gives rise to verifiably truthful mechanisms providing a good approximation to social welfare? One can ask similar questions in the context of every problem studied in algorithmic mechanism design (with or without money).


\bibliographystyle{plain}

\begin{thebibliography}{10}

\bibitem{AFPT10}
N.~Alon, M.~Feldman, A.~D. Procaccia, and M.~Tennenholtz.
\newblock Strategyproof approximation of the minimax on networks.
\newblock {\em Mathematics of Operations Research}, 35(3):513--526, 2010.

\bibitem{AFPT11}
N.~Alon, F.~Fischer, A.~D. Procaccia, and M.~Tennenholtz.
\newblock Sum of us: Strategyproof selection from the selectors.
\newblock In {\em Proceedings of the 13th Conference on Theoretical Aspects of
  Rationality and Knowledge (TARK)}, pages 101--110, 2011.

\bibitem{AFKP14}
I.~Ashlagi, F.~Fischer, I.~Kash, and A.~D. Procaccia.
\newblock Mix and match.
\newblock {\em Game. Econ. Behav.}, 2014.
\newblock Forthcoming.

\bibitem{BFPR+73}
M.~Blum, R.~W. Floyd, V.~Pratt, R.~L. Rivest, and R.~E. Tarjan.
\newblock Time bounds for selection.
\newblock {\em Journal of Computer and System Sciences}, 7(4):448--461, 1973.

\bibitem{BFVW06}
R.~H. Bordini, M.~Fisher, W.~Visser, and M.~Wooldridge.
\newblock Verifying multi-agent programs by model checking.
\newblock {\em Autonomous Agent and Multi-Agent Systems}, 12:239--256, 2006.

\bibitem{CESY12}
I.~Caragiannis, E.~Elkind, M.~Szegedy, and L.~Yu.
\newblock Mechanism design: from partial to probabilistic verification.
\newblock In {\em Proceedings of the 13th ACM Conference on Electronic Commerce
  (EC)}, pages 266--283, 2012.

\bibitem{CFP11}
I.~Caragiannis, A.~Filos-Ratsikas, and A.~D. Procaccia.
\newblock An improved 2-agent kidney exchange mechanism.
\newblock In {\em Proceedings of the 7th International Workshop on Internet and
  Network Economics (WINE)}, pages 37--48, 2011.

\bibitem{CYZ13}
Y.~Cheng, W.~Yu, and G.~Zhang.
\newblock Strategy-proof approximation mechanisms for an obnoxious facility
  game on networks.
\newblock {\em Theoretical Computer Science}, 497:154--163, 2013.

\bibitem{CGG13}
R.~Cole, V.~Gkatzelis, and G.~Goel.
\newblock Mechanism design for fair division: Allocating divisible items
  without payments.
\newblock In {\em Proceedings of the 14th ACM Conference on Electronic Commerce
  (EC)}, pages 251--268, 2013.

\bibitem{CS02b}
V.~Conitzer and T.~Sandholm.
\newblock Complexity of mechanism design.
\newblock In {\em Proceedings of the 18th Annual Conference on Uncertainty in
  Artificial Intelligence (UAI)}, pages 103--110, 2002.

\bibitem{DD13}
S.~Dobzinski and S.~Dughmi.
\newblock On the power of randomization in algorithmic mechanism design.
\newblock {\em SIAM Journal on Computing}, 42(6):2287--2304, 2013.

\bibitem{DG10}
S.~Dughmi and A.~Ghosh.
\newblock Truthful assignment without money.
\newblock In {\em Proceedings of the 11th ACM Conference on Electronic Commerce
  (EC)}, pages 325--334, 2010.

\bibitem{FT10}
D.~Fotakis and C.~Tzamos.
\newblock Winner-imposing strategyproof mechanisms for multiple facility
  location games.
\newblock In {\em Proceedings of the 6th International Workshop on Internet and
  Network Economics (WINE)}, pages 234--245, 2010.

\bibitem{FT13a}
D.~Fotakis and C.~Tzamos.
\newblock On the power of deterministic mechanisms for facility location games.
\newblock In {\em Proceedings of the 40th International Colloquium on Automata,
  Languages and Programming (ICALP)}, pages 449--460, 2013.

\bibitem{FT13b}
D.~Fotakis and C.~Tzamos.
\newblock Strategyproof facility location for concave cost functions.
\newblock In {\em Proceedings of the 14th ACM Conference on Electronic Commerce
  (EC)}, pages 435--452, 2013.

\bibitem{GC10}
M.~Guo and V.~Conitzer.
\newblock Strategy-proof allocation of multiple items between two agents
  without payments or priors.
\newblock In {\em Proceedings of the 9th International Conference on Autonomous
  Agents and Multi-Agent Systems (AAMAS)}, pages 881--888, 2010.

\bibitem{GCR09}
M.~Guo, V.~Conitzer, and D.~Reeves.
\newblock Competitive repeated allocation without payments.
\newblock In {\em Proceedings of the 5th International Workshop on Internet and
  Network Economics (WINE)}, pages 244--255, 2009.

\bibitem{KP06}
L.~Kang and D.~C. Parkes.
\newblock Passive verification of the strategyproofness of mechanisms in open
  environments.
\newblock In {\em Proceedings of the 8th International Conference on Electronic
  Commerce (ICEC)}, 2006.

\bibitem{Kout11}
E.~Koutsoupias.
\newblock Scheduling without payments.
\newblock In {\em Proceedings of the 4th International Symposium on Algorithmic
  Game Theory (SAGT)}, pages 143--153, 2011.

\bibitem{LSWZ10}
P.~Lu, X.~Sun, Y.~Wang, and Z.~A. Zhu.
\newblock Asymptotically optimal strategy-proof mechanisms for two-facility
  games.
\newblock In {\em Proceedings of the 11th ACM Conference on Electronic Commerce
  (EC)}, pages 315--324, 2010.

\bibitem{LWZ09}
P.~Lu, Y.~Wang, and Y.~Zhou.
\newblock Tighter bounds for facility games.
\newblock In {\em Proceedings of the 5th International Workshop on Internet and
  Network Economics (WINE)}, pages 137--148, 2009.

\bibitem{Moul80}
H.~Moulin.
\newblock On strategy-proofness and single-peakedness.
\newblock {\em Public Choice}, 35:437--455, 1980.

\bibitem{Mualem05}
A.~Mu'alem.
\newblock A note on testing truthfulness.
\newblock {\em Electronic Colloquium on Computational Complexity, Report No.
  130}, 2005.

\bibitem{NR01}
N.~Nisan and A.~Ronen.
\newblock Algorithmic mechanism design.
\newblock {\em Game. Econ. Behav.}, 35(1--2):166--196, 2001.

\bibitem{NST12}
K.~Nissim, R.~Smorodinsky, and M.~Tennenholtz.
\newblock Approximately optimal mechanism design via differential privacy.
\newblock In {\em Proceedings of the 3rd Innovations in Theoretical Computer
  Science Conference (ITCS)}, pages 203--213, 2012.

\bibitem{PW03}
M.~Pauly and M.~Wooldridge.
\newblock Logic for mechanism design---a manifesto.
\newblock In {\em Proceedings of the 5th Workshop on Game Theoretic and
  Decision Theoretic Agents (GTDT)}, 2003.

\bibitem{PT13ec}
A.~D. Procaccia and M.~Tennenholtz.
\newblock Approximate mechanism design without money.
\newblock {\em ACM Transactions on Economics and Computation}, 2013.
\newblock Forthcoming; preliminary version in EC'09.

\bibitem{TGV09}
E.~M. Tadjouddine, F.~Guerin, and W.~Vasconcelos.
\newblock Abstracting and verifying strategy-proofness for auction mechanisms.
\newblock In {\em Proceedings of the 7th International Workshop on Declarative
  Agent Languages and Technologies (DALT)}, pages 197--214, 2009.

\bibitem{Than10}
N.~K. Thang.
\newblock On (group) strategy-proof mechanisms without payment for facility
  location games.
\newblock In {\em Proceedings of the 4th International Workshop on Internet and
  Network Economics (WINE)}, pages 531--538, 2010.

\bibitem{TIY11}
T.~Todo, A.~Iwasaki, and M.~Yokoo.
\newblock False-name-proof mechanism design without money.
\newblock In {\em Proceedings of the 10th International Conference on
  Autonomous Agents and Multi-Agent Systems (AAMAS)}, pages 651--658, 2011.

\bibitem{Vick61}
W.~Vickrey.
\newblock Counter speculation, auctions, and competitive sealed tenders.
\newblock {\em J. Financ.}, 16(1):8--37, 1961.

\bibitem{WF13}
Y.~Wilf and M.~Feldman.
\newblock Strategyproof facility location and the least squares objective.
\newblock In {\em Proceedings of the 14th ACM Conference on Electronic Commerce
  (EC)}, pages 873--890, 2013.

\end{thebibliography}

\appendix

\section{Step II: Constructing a Verification Algorithm}
\label{sec:missing_pseudocode}

Below we give the pseudocode for \textsc{Build-Leaf-Constraints} (Algorithm~\ref{alg:leaf}), which is used in the main verification algorithm.

\begin{algorithm}[h]
\label{alg:leaf}
\caption{\textsc{Build-Leaf-Constraints($\mathcal{T}$)}}
\KwData{mechanism $\mathcal{T}$}
\KwResult{set of symbolic constraints $\mathcal{C}$;
the location at leaf $\mathcal{L}$ is selected 
on input $\vx$ $\iff$ constraints $\mathcal{C}_{\mathcal{L}}(\vx)$ hold}
$\mathcal{C} \leftarrow \emptyset$ //\textit{ Initialize the set of constraints}\\
\ForEach{leaf $\mathcal{L} \in \mathcal{T}$} {
  $Q \leftarrow \mathcal{L}$\\
  \While{$Q \neq Null$}{
      //\textit{ Add the constraint that must hold for $Q$ to be reached from $parent(Q)$}\\
      $c \leftarrow \mbox{\emph{constraint}}(\mbox{\emph{parent}}(Q).\mbox{\emph{Next}}() = Q)$\\
      \Switch{$c$} {
          \Case{$x_{i_c} \geq x_{j_c}$} {
              $\mathcal{C}_{\mathcal{L}}(\vx) \leftarrow \mathcal{C}_{\mathcal{L}}(\vx) \cup \{x_{i_c} - x_{j_c} \geq 0\}$\\
          }
          \Case{$x_{i_c} > x_{j_c}$} {
              $\mathcal{C}_{\mathcal{L}}(\vx) \leftarrow \mathcal{C}_{\mathcal{L}}(\vx) \cup \{x_{i_c} - x_{j_c} \geq 1 \}$ \\
          }
          \Case{$x_{i_c} \leq x_{j_c}$} {
              $\mathcal{C}_{\mathcal{L}}(\vx) \leftarrow \mathcal{C}_{\mathcal{L}}(\vx) \cup \{x_{j_c} - x_{i_c} \geq 0\}$ \\ 
          }
          \Case{$x_{i_c} < x_{j_c}$} {
              $\mathcal{C}_{\mathcal{L}}(\vx) \leftarrow \mathcal{C}_{\mathcal{L}}(\vx) \cup \{x_{j_c} - x_{i_c} \geq 1\}$ \\
          }
      }
      $Q \leftarrow \mbox{\emph{parent}}(Q)$\\
  }
}
\textbf{return} $\mathcal{C}$
\end{algorithm}

\noindent\textbf{Theorem~\ref{thm:step2lb}} (restated):
\emph{Let $N=\{1,\ldots,n\}$ with $n\geq 2$, and $\ell\leq n!$.
Then any algorithm that verifies truthfulness for every deterministic decision tree with $\ell$ leaves for $n$ agents must inspect all the leaves in the worst case.}
\begin{proof}
Assume by contradiction there exists a verification algorithm that can check truthfulness for every tree with $\ell$ leaves without inspecting all the leaves.
Let $\mathcal{T}$ be a decision tree in which every internal node has the form $x_i < x_j$, for $ i,j \in N$ such that $i<j$, and the location is set to $x_1$ in every leaf. Since there are $n!$ possible orders of the agent locations, we can generate such a tree with $\ell$ leaves. 
Clearly, $\mathcal{T}$ is truthful since it coincides
with the mechanism in which agent $1$ is a dictator.


Consider the execution of the verification algorithm on input $\mathcal{T}$ and let $\mathcal{L}$ be a leaf that is not inspected by the algorithm. 
Construct a tree $\mathcal{T}'$ that is identical to $\mathcal{T}$, with the exception of leaf
$\mathcal{L}$, where the selected location is $y_{\mathcal{L}}(\vx) = \frac{x_1 + \ldots x_n}{n}$. First note that mechanism $\mathcal{T}'$ is not truthful. 
For every leaf of $\mathcal{T}'$, the mechanism cannot enforce that two variables are equal, since that would require comparing $x_i < x_j$ and $x_j < x_i$ (similarly if weak inequalities are used). However, if $i < j$ then $\mathcal{T}'$ can only check if 
$x_i < x_j$; similarly, if $j < i$, then $\mathcal{T}'$ can only check if $x_j < x_i$.
Thus the leaf $\mathcal{L}$ can be reached when the input $\vx$ is consistent with some strict ordering $\pi$ on $n$ elements.

Define $\vx \in \mathbb{R}^{n}$
such that $x_{\pi_1} < x_{\pi_2} < \ldots < x_{\pi_n}$. Then $y_{\mathcal{L}}(\vx) = \frac{x_1 + \ldots x_n}{n}$ and the cost of agent $\pi_n$ is $\cost(x_{\pi_n}, y_{\mathcal{L}}(\vx)) = x_{\pi_n} - y_{\mathcal{L}}(\vx)$.
There exists $\delta > 0$ such that by reporting $x_{\pi_n}' = x_{\pi_n} + \delta$, agent $\pi_n$ ensures that leaf $\mathcal{L}$ is still reached
and the new cost is lower:
\begin{eqnarray*}
\cost(x_{\pi_n}, y_{\mathcal{L}}(x_{\pi_n}', \vx_{-\pi_n})) & = & x_{\pi_n} - \frac{\left(\sum_{i \neq \pi_n} x_i\right) + \left(x_{\pi_n} + \delta\right)}{n} \\
& < &  x_{\pi_n} - \frac{x_1 + \ldots + x_n}{n} \\
& = & \cost(x_{\pi_n}, y_{\mathcal{L}}(\vx)).
\end{eqnarray*}
However, since the verification algorithm does not inspect leaf $\mathcal{L}$, it cannot distinguish between $\mathcal{T}$ and $\mathcal{T}'$, and so it decides that $\mathcal{T}'$ is also truthful. This contradicts the correctness of the verification algorithm. 
%
\end{proof}

\section{Step III: Measuring the Quality of Verifiably Truthful Mechanisms}

\subsection{Mising Proofs: Deterministic Mechanisms}

\noindent\textbf{Theorem~\ref{thm:step3lb}} (restated):
\emph{Let $N=\{1,\ldots,n\}$ with $n\geq 2$, and $\ell\leq n!$.
For every constant $k \in \mathbb{N}$, every truthful deterministic decision tree for $n$ agents of size
at most $n^k$ has an approximation ratio of $\Omega\left(\frac{n}{\log{n}}\right)$ for the social cost.
}
\begin{proof}
%
%
Let $\mathcal{M}$ be a deterministic mechanism represented by some decision tree $\mathcal{T}$ of size at most $n^k$.
Recall that every internal node in $\mathcal{T}$ checks the order of two input variables with one of the following inequalities: $\{x_i \geq x_j$, $x_i \leq x_j$, $x_i < x_j$, $x_i > x_j\}$. 

Since $\mathcal{T}$ is binary and $|\mathcal{T}| \leq n^k$, there exists at least one leaf $\mathcal{L} \in \mathcal{T}$ of depth 
$$d < 2 \cdot \log(|\mathcal{T}|) \leq 2 \log(n^k) = 2k\log(n).$$
Let $S_{\mathcal{L}} = \{i_1, \ldots, i_m\}$ be the set of agents whose locations are inspected on the path to $\mathcal{L}$. It holds that $|S_{\mathcal{L}}| = m \leq 2 \cdot d \leq 4 k \cdot \log(n)$, since $\mathcal{L}$ has depth $d$ and each node on the path to $\mathcal{L}$ inspects two locations. Note that if $S_{\mathcal{L}} = \emptyset$, then $\mathcal{M}$ is a dictatorship, and so its approximation ratio is no better than $n-1$. Thus we can assume that $S_{\mathcal{L}} \neq \emptyset$.

Recall that the facility at $\mathcal{L}$ is a convex combination of the input locations; that is, $y_{\mathcal{L}}(\vx) = \sum_{i=1}^{n} \lambda_{\mathcal{L},i} \cdot x_i$, where $\lambda_{\mathcal{L},i} \in [0,1], \forall i \in N$ 
and $\sum_{i=1}^{n} \lambda_{\mathcal{L},i} = 1$. Let $\pi$ be a weak ordering consistent with the leaf $\mathcal{L}$ and
$D_{\mathcal{L}} = \{i_1, \ldots, i_l\}$ a ``deduplicated'' version of $S_{\mathcal{L}}$, such that $D_{\mathcal{L}}$ contains one representative agent $i$
for each maximal subset $W \subseteq S_{\mathcal{L}}$ with the property that under $\pi$, $x_j = x_i$, $\forall j \in W$.
Note that $D_{\mathcal{L}}$ is consistent with some strict ordering $\sigma$ on $l$ elements. 

We distinguish between three cases:

\begin{enumerate}
\item The facility at $\mathcal{L}$ is a convex combination of agents in $S_{\mathcal{L}}$ only (i.e., $\lambda_{\mathcal{L},i} = 0$, $\forall i \not \in S_{\mathcal{L}}$).

Let $\epsilon$ be fixed such that $0 < \epsilon < \frac{|S_{\mathcal{L}}|}{n}$ and define the following input $\vx = \langle x_1, \ldots, x_n \rangle$:
\begin{itemize}
\item For each $i \in D_{\mathcal{L}}$, let $r_i$ be the number of agents in $D_{\mathcal{L}}$ strictly to the left of $i$ according to $\sigma$; set $x_i \leftarrow \epsilon \cdot \left(\frac{r_i}{n}\right)$.
\item For each $j \in S_{\mathcal{L}} \setminus D_{\mathcal{L}}$, set $x_j \leftarrow x_i$, where $i \in D_{\mathcal{L}}$ and $x_i = x_j$ according to $\pi$.
\item For each $j \not \in S_{\mathcal{L}}$, set $x_j \leftarrow 1$.
\end{itemize}
The optimal location of the facility given $\vx$ is $y^{*} = 1$, since most agents are situated at $1$ (except the agents in $S_{\mathcal{L}}$, 
of which there are at
most: $4k\log(n) \ll n/2$). The optimal social cost is:
\[
\soc^{*}(\vx) = \sum_{i = 1}^{n} \cost(x_i, y^*) = \sum_{i \in S_{\mathcal{L}}} (1 - x_i) \leq 1 \cdot |S_{\mathcal{L}}|.
\]

On the other hand, the output of the mechanism is $y_{\mathcal{L}}(\vx) = \sum_{i \in S_{\mathcal{L}}} \lambda_{\mathcal{L},i} \cdot x_i \leq \epsilon$; the 
social cost incurred by $\mathcal{M}$ on $\vx$ is:
\[
\soc(\vx,\mathcal{M}(\vx)) = \sum_{i=1}^{n} \cost(x_i, y_{\mathcal{L}}(\vx)) \geq (n - |S_{\mathcal{L}}|) \cdot (1 - \epsilon).
\]

Choosing $\epsilon\leq 1/n$, the approximation ratio of $\mathcal{M}$ is no better than:
\begin{eqnarray*}
\frac{\soc(\vx,\mathcal{M}(\vx))}{\soc^{*}(\vx)} & \geq & \frac{(n - |S_{\mathcal{L}}|) \cdot (1 - \epsilon)}{|S_{\mathcal{L}}|} \\
& = & \frac{n}{|S_{\mathcal{L}}|} - \frac{n\epsilon}{|S_{\mathcal{L}}|} - 1 + \epsilon \\
& > & \frac{n}{4k \log(n)} - 2 \in \Omega\left(\frac{n}{\log(n)}\right).
\end{eqnarray*}

\item The facility coincides with the location of some agent $t \not \in S_{\mathcal{L}}$ (i.e. $y_{\mathcal{L}}(\vx) = x_t$).

Similarly to Case 1, let $\epsilon$ be fixed such that
$0 < \epsilon < \frac{|S_{\mathcal{L}}| + 1}{n}$
and define 
$\vx = \langle x_1, \ldots, x_n\rangle$ as follows:
\begin{itemize}
\item For each $i \in D_{\mathcal{L}}$, let $r_i$ be the number of agents in $D_{\mathcal{L}}$ strictly to the left of $i$ according to $\sigma$;
set $x_i \leftarrow \epsilon \cdot \left(\frac{r_i}{n}\right)$.
\item For each $j \in S_{\mathcal{L}} \setminus D_{\mathcal{L}}$, set $x_j \leftarrow x_i$, where $i \in D_{\mathcal{L}}$ and $x_i = x_j$ according to $\pi$.
\item Set $x_t = 0$.
\item For each $j \not \in S_{\mathcal{L}}, j \neq t$, set $x_j \leftarrow 1$.
\end{itemize}

The optimal location on $\vx$ is $y^* = 1$, since most agents are located at $1$ (except agent $t$ and the agents in $S_{\mathcal{L}}$).
As in Case 1, by also taking agent $t$ into account, we get:

\begin{eqnarray*}
\frac{\soc(\vx,\mathcal{M}(\vx))}{\soc^{*}(\vx)} & \geq &
\frac{(n - |S_{\mathcal{L}}| - 1) \cdot (1 - \epsilon)}{|S_{\mathcal{L}}| + 1} 
\in \Omega\left(\frac{n}{\log(n)}\right).
\end{eqnarray*}

\item The facility is a weighted sum with at least two terms, one of which is an agent $t \not \in S_{\mathcal{L}}$. 
We claim that no mechanism that is truthful on the full domain (i.e. the line) 
can have such an output at any leaf. Let $\epsilon, \delta > 0$ be such that
\[
\delta = \frac{1}{2}\left( \frac{1}{\lambda_{\mathcal{L},t}}-1\right) \; \mbox{and} \; \epsilon = \frac{1 - \lambda_{\mathcal{L},t} (1 + \delta)}{n-1}.
\]
Consider an input $\vx$ consistent with the ordering $\pi$ such that $x_t = 1$ and $x_i \in (0, \epsilon), \forall i \neq t$. 
Then:
\[
y_{\mathcal{L}}(\vx) = \sum_{i=1}^{n} \lambda_{\mathcal{L},i} \cdot x_i = 
\left(\sum_{i \neq t} \lambda_{\mathcal{L},i} \cdot x_i \right) + \lambda_{\mathcal{L},t} \cdot 1. 
\]
If agent $t$ reports instead $x_t' = 1 + \delta$, the output of $\mathcal{M}$ on $\vx' = (x_t', \vx_{-t})$ is:
\[
y_{\mathcal{L}}(\vx') = \left(\sum_{i \neq t} \lambda_{\mathcal{L},i} \cdot x_i \right) + \lambda_{\mathcal{L},t} \cdot (1 + \delta).
\]
It can be verified that $0 < y_{\mathcal{L}}(\vx) < y_{\mathcal{L}}(\vx') < 1$, and so $\cost(x_t, y_{\mathcal{L}}(\vx')) < \cost(x_t, y_{\mathcal{L}}(\vx))$, which contradicts the truthfulness of $\mathcal{M}$.
Thus Case 3 never occurs.
\end{enumerate}

By the cases above, there exists at least one input on which the approximation ratio of $\mathcal{M}$ is $\Omega\left(\frac{n}{\log(n)}\right)$, which completes the proof.
\end{proof}

\noindent\textbf{Theorem~\ref{thm:det_apx}} (restated):
\emph{
For every $n\in\mathbb{N}$ there is a truthful deterministic decision tree of size  $O(n^6)$ that approximates
the social cost within a factor of $O\left(\frac{n}{\log(n)}\right)$.
}
\begin{proof}
First, we claim that for every $k \in \{1, \ldots, n/2\}$,
there exists a truthful, deterministic decision tree of size $O(2^{6k})$ that approximates the social cost within a factor of
$O\left(\frac{n-k}{k}\right)$. Given a fixed $k$, let $\mathcal{M}$ be the following mechanism:
\begin{itemize}
\item Given input $\vx = (x_1, \ldots, x_n)$, output $\mbox{median}(\{x_1, \ldots, x_k\})$.
\end{itemize}
That is, $\mathcal{M}$ always outputs the median of the fixed set of agents $\{1, \ldots, k\}$. Computing the median on an input vector of size $k$ requires fewer than $6k$ comparisons~\cite{BFPR+73}, and since the decision tree for $\mathcal{M}$ is binary, its size is $O(2^{6k})$.

We next claim that the approximation ratio of $\mathcal{M}$ is $O\left(\frac{n-k}{k}\right)$. 
Indeed, given any instance $\vx \in \mathbb{R}^{n}$, denote $\tilde{m} = \mathcal{M}(\vx)$ and $m^{*} = \text{argmin}_{y\in \mathbb{R}} \soc(\vx,y)$. Without loss of generality, assume that $\tilde{m} < m^{*}$
and let $\Delta = |\tilde{m} - m^{*}|$. Let $S_l = \{x_i \; | \; x_i \leq \tilde{m}\}$, $S_r = \{x_i \; | \; x_i \geq m^{*}\}$, and $S_m = \{ x_i \; | \; \tilde{m} < x_i < m^{*}\}$ be the sets of points to the left of $\tilde{m}$, to the right of $m^{*}$, and strictly between
$\tilde{m}$ and $m^{*}$, respectively. Denote the sizes of the sets by $n_l = |S_l|$, $n_r = |S_r|$, and $n_m = |S_m|$, where $n_l + n_m + n_r = n$.

We compute the upper bound by comparing the social cost of $\mathcal{M}$ on $\vx$, $\soc(\vx,\mathcal{M}(\vx)) = \sum_{i=1}^{n} \cost(x_i, \tilde{m})$, with $\soc^{*}(\vx) = \sum_{i=1}^{n} \cost(x_i, m^{*})$.
Observe that for all the points in $S_{r}$, the cost increases by exactly $\Delta$ when moving the location from $m^{*}$ to $\tilde{m}$. On the other hand, the change from $m^{*}$ to $\tilde{m}$ results in a decrease by exactly $\Delta$ for
the points in $S_{l}$.
Thus $\soc(\vx,\mathcal{M}(\vx))$ can be expressed as follows:
\[
\soc(\vx,\mathcal{M}(\vx)) = \soc^{*}(\vx) + n_r \cdot \Delta + \sum_{j \in S_m} [\cost(x_j, \tilde{m}) - \cost(x_j, m^{*})] - n_l \cdot \Delta.
\]
The ratio of the costs is:
\[
\frac{\soc(\vx,\mathcal{M}(\vx))}{\soc^*(\vx)}= \frac{\soc^*(\vx) + n_r \cdot \Delta + \sum_{j \in S_m} [\cost(x_j, \tilde{m}) - \cost(x_j, m^{*})] - n_l \cdot \Delta}{\soc^*(\vx)}.
\]
We claim that 
\begin{equation} \label{eq1}
\frac{\soc(\vx,\mathcal{M}(\vx))}{\soc^*(\vx)} \leq \frac{3(n-k)}{k}.
\end{equation}
Inequality (\ref{eq1}) is equivalent to:
\[
k \cdot n_r \cdot \Delta + k \cdot \sum_{j \in S_m} [\cost(x_j, \tilde{m}) - \cost(x_j, m^{*})] - k \cdot n_l \cdot \Delta \leq (3n - 4k) \soc^*(\vx).
\]
Note that for all $j \in S_m$, $\cost(x_j, \tilde{m}) - \cost(x_j, m^{*}) \leq \Delta$, and so if Inequality (\ref{eq1}) holds when 
$\cost(x_j, \tilde{m}) - \cost(x_j, m^{*}) = \Delta$, then it also
holds for all other instances where the change in cost is smaller for some agents $j \in S_m$. Formally, if:
\begin{equation} \label{eq2}
k \cdot n_r \cdot \Delta + k \cdot n_m \cdot \Delta - k \cdot n_l \cdot \Delta \leq (3n - 4k) \soc^*(\vx),
\end{equation}
then Inequality (\ref{eq1}) also holds. Inequality (\ref{eq2}) is equivalent to:

\begin{equation}
\label{eq3}
  \begin{alignedat}{2}
\soc^*(\vx) & \; \geq \; && \frac{k \cdot n_r \cdot \Delta + k \cdot n_m \cdot \Delta - k \cdot n_l \cdot \Delta}{3n - 4k} \\
&\; = \; && \frac{k \cdot (n_r + (n - n_l - n_r) - n_l) \cdot \Delta}{3n - 4k}\\
& \; = \; && \frac{k \cdot (n - 2n_l) \cdot \Delta}{3n - 4k}.
  \end{alignedat}
\end{equation}

Each of the agents in $S_l$ pays a cost of at least $\Delta$ under $\soc^*(\vx)$, and so $\soc^*(\vx) \geq n_l \cdot \Delta$. Moreover, since $\tilde{m}$ is the median of $\{x_1, \ldots, x_k\}$, it follows that $n_l \geq \frac{k}{2}$.
We first show that $n_l \cdot \Delta \geq \frac{k \cdot (n - 2n_l) \cdot \Delta}{3n - 4k}$:
\begin{equation} 
\label{eq4}
\begin{split}
& n_l \cdot \Delta \geq \frac{k \cdot (n - 2n_l) \cdot \Delta}{3n - 4k}  \\
 \iff & n_l (3n - 4k) \geq k(n - 2n_l) \\
 \iff  & n_l (3n - 2k) \geq kn  \\
\iff & n_l \geq \frac{kn}{3n - 2k}
\end{split}
\end{equation}
In addition, we have that
\begin{equation} \label{eq5}
\frac{k}{2} \geq \frac{kn}{3n - 2k} \iff 3kn - 2k^2 \geq 2kn \iff n \geq 2k.
\end{equation}
Inequality (\ref{eq5}) holds by the choice of $k$; combining it with $n_l \geq \frac{k}{2}$, we obtain: 
\begin{equation}\label{eq6}
n_l \geq \frac{k}{2} \geq \frac{kn}{3n - 2k}.
\end{equation}
By Inequality \eqref{eq3}, it follows that $n_l \cdot \Delta \geq \frac{k \cdot (n - 2n_l) \cdot \Delta}{3n - 4k}$. In addition, $\soc^*(\vx) \geq n_l \cdot \Delta$, thus:
\[
\soc^*(\vx) \geq n_l \cdot \Delta \geq \frac{k \cdot (n - 2n_l) \cdot \Delta}{3n - 4k}.
\]
Equivalently, Inequality (\ref{eq2}) holds, which gives the worst case bound required for Inequality (\ref{eq1})
to always hold. Thus $\frac{\soc(\vx,\mathcal{M}(\vx))}{\soc^*(\vx)} \leq \frac{3(n-k)}{k}$, for every input $\vx$.

Let $k = \log n$. Then $\mathcal{M}$ can be implemented using a decision tree of size 
$O(n^6)$ and has an approximation ratio bounded by 
\[
\frac{\soc(\vx,\mathcal{M}(\vx))}{\soc^*(\vx)} \leq \frac{3(n-k)}{k} 
\in O\left(\frac{n}{\log(n)}\right)
\]
This completes the proof of the theorem.
\end{proof}

\subsection{Missing Proofs: Randomized Mechanisms}

\noindent\textbf{Theorem~\ref{thm:step3rand}} (restated):
\emph{
For every $0 < \epsilon < \frac{1}{10}$ and $n\in \mathbb{N}$, 
there exists a universally truthful randomized decision tree of polynomial size in $n$ that
approximates
the social cost to a factor of $1+ \epsilon$.
}
The idea is the following: we sample a subset of agents of logarithmic size -- more exactly $O\left(\frac{\ln(n/\epsilon)}{\epsilon^2}\right)$ -- and select the median among their reported locations. To reason about this mechanism, we define the rank of an element $x$ in a set $S$ ordered by $\succ$ to be $\rank(x)=|\{y\in S\mid y\succ x \vee y=x\}|$, and the \emph{$\epsilon$-median} of $S$ to be $x\in S$ such that $(1/2-\epsilon)|S|<\rank(x)<(1/2+\epsilon)|S|$. The following lemma is a folklore result when sampling is done with replacement; we include its proof because we must sample without replacement. 

\begin{lemma} \label{lem:sampling}
Consider the algorithm that samples $t$ elements without replacement from a set $S$ of cardinality $n$, and returns the median of the sampled points. For any $\epsilon,\delta<1/10$, if 
$$
\frac{100\ln\frac{1}{\delta}}{\epsilon^2}\leq t\leq \epsilon n,
$$
then the algorithm returns an $\epsilon$-median with probability $1-\delta$.  
\end{lemma}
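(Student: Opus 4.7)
The plan is to reduce the failure event ``the sample median is not an $\epsilon$-median of $S$'' to two simpler events about how many sample elements fall into one of two fixed ``extreme'' subsets of $S$, and then control each of those events via a concentration inequality for sampling without replacement.

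First I would fix the extreme subsets. Let $T=\{x\in S:\rank(x)\leq (1/2-\epsilon)n\}$ and $B=\{x\in S:\rank(x)\geq (1/2+\epsilon)n\}$, so that $|T|,|B|\leq (1/2-\epsilon)n$ and, by definition, $x\in S$ is an $\epsilon$-median iff $x\notin T\cup B$. The key combinatorial observation is: if the sample median $x^{*}$ lies in $T$, then every sample point of rank at most $\rank(x^{*})$ (including $x^{*}$ itself) also lies in $T$; by the definition of sample median there are at least $\lceil t/2\rceil$ such points, so the sample contains at least $\lceil t/2\rceil$ elements of $T$, and the symmetric statement holds for $B$. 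Thus the failure event is contained in the union of the two events $\{X_{T}\geq \lceil t/2\rceil\}$ and $\{X_{B}\geq \lceil t/2\rceil\}$, where $X_{T}$ and $X_{B}$ denote the number of sample elements in $T$ and $B$, respectively.

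Next I would bound each bad event. The variable $X_{T}$ is hypergeometric with mean $\mathbb{E}[X_{T}]=t|T|/n\leq t(1/2-\epsilon)$, so $\{X_{T}\geq t/2\}\subseteq\{X_{T}-\mathbb{E}[X_{T}]\geq \epsilon t\}$. Hoeffding's tail bound for the hypergeometric distribution (Hoeffding, 1963) yields $\Pr[X_{T}-\mathbb{E}[X_{T}]\geq \epsilon t]\leq \exp(-2\epsilon^{2}t)$, and symmetrically for $X_{B}$; a union bound then gives total failure probability at most $2\exp(-2\epsilon^{2}t)$. Plugging in $t\geq 100\ln(1/\delta)/\epsilon^{2}$ makes this at most $2\delta^{200}\leq \delta$ for $\delta<1/10$, as required. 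The upper bound $t\leq \epsilon n$ is used only to ensure the algorithm is well defined (sampling $t$ distinct elements requires $t\leq n$) and, if one tracks integer rounding carefully, to keep $|T|/n$ safely at most $1/2-\epsilon$. The one subtle point is that the sample is drawn \emph{without} replacement, so the indicators $\mathbf{1}[x\in\text{sample}]$ for $x\in T$ are not independent; this is precisely the regime covered by Hoeffding's original paper, which establishes that sampling without replacement enjoys at least as strong exponential concentration as sampling with replacement, and once this tool is available the remainder is a short combinatorial reduction plus a union bound.
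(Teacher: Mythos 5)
Your proof is correct, and it takes a genuinely different technical route from the paper. Both start with the same combinatorial reduction: failure of the sample median to be an $\epsilon$-median forces at least $t/2$ sample points into one of the two extreme rank sets, and a union bound then reduces the problem to a one-sided tail bound on the count in a fixed subset. The difference is in how that tail is controlled. The paper does \emph{not} invoke a hypergeometric concentration inequality; it instead bounds each conditional success probability by $\tfrac{n/2-\epsilon n}{n-(i-1)}\le \tfrac{1}{2}-\tfrac{\epsilon}{3}$ (this is precisely where the hypothesis $t\le\epsilon n$ enters), stochastically dominates the dependent indicators by an i.i.d.\ $\mathrm{Bernoulli}(\tfrac12-\tfrac{\epsilon}{3})$ sequence, and then applies the ordinary multiplicative Chernoff bound. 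You instead cite Hoeffding's 1963 result that sampling without replacement enjoys the same exponential tail as independent sampling, which lets you bound $\Pr[X_T-\mathbb{E}[X_T]\ge\epsilon t]\le\exp(-2\epsilon^2 t)$ directly. Your route is shorter and yields a stronger numerical bound ($2\delta^{200}$ versus the paper's $\delta$), and it essentially does not use the hypothesis $t\le\epsilon n$ at all, whereas the paper's domination step does. The tradeoff is that your argument offloads the without-replacement issue onto a citation of Hoeffding's hypergeometric theorem, while the paper's argument is self-contained modulo the standard i.i.d.\ Chernoff bound; both are standard and valid, so this is a stylistic difference rather than a gap.
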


\begin{proof}
We partition $S$ into three subsets:
$$
S_1=\{x\in S\mid \rank(x)\leq n/2-\epsilon n\},
$$ 
$$S_2=\{x\in S \mid n/2-\epsilon n<\rank(x)< n/2+\epsilon n\},
$$
and 
$$
S_3=\{x\in S\mid \rank(x)\geq n/2-\epsilon n\}.
$$ 
Suppose that $t$ elements are sampled without replacement from $S$. If less than $t/2$ are sampled from $S_1$, and less than $t/2$ are sampled from $S_3$, then the median of the sampled elements will belong to $S_2$ --- implying that it is an $\epsilon$-approximate median. 

Let us, therefore, focus on the probability of sampling \emph{at least} $t/2$ samples from $S_1$. Define a Bernoulli random variable $X_i$ for all $i=1,\ldots,t$, which takes the value 1 if and only if the $i$'th sample is in $S_1$. 

Note that $X_1,\ldots,X_t$ are not independent (because we are sampling with replacement), but for all $i$ it holds that 
$$
\Pr[X_i=1\mid X_1=x_1,\ldots,X_{i-1}=x_{i-1}]\leq \frac{\frac{n}{2}-\epsilon n}{n-(i-1)}\leq \frac{\frac{n}{2}-\epsilon n}{n-\epsilon n}\leq \frac{1}{2}-\frac{\epsilon}{3},
$$
for any $(x_1,\ldots,x_{i-1})\in\{0,1\}^{i-1}$, where the second inequality follows from $i\leq t\leq \epsilon n$. 

Let $Y_1,\ldots,Y_t$ be i.i.d. Bernoulli random variables such that $Y_i=1$ with probability $1/2-\epsilon/3$. Then for all $x$, 
$$
\Pr\left[\sum_{i=1}^t X_i\geq x\right]\leq \Pr\left[\sum_{i=1}^t Y_i\geq x\right].
$$
Using Chernoff's inequality, we conclude that
\begin{align*}
\Pr\left[\sum_{i=1}^t X_i\geq \frac{t}{2}\right]&\leq \Pr\left[\sum_{i=1}^t Y_i\geq \frac{t}{2}\right]
=\Pr\left[\sum_{i=1}^t Y_i\geq \left(1+\frac{\epsilon}{\frac{3}{2}-\epsilon}\right)\mathbb{E}\left[\sum_{i=1}^t Y_i  \right]\right]\\
&\leq\Pr\left[\sum_{i=1}^t Y_i\geq \left(1+\frac{\epsilon}{2}\right)\mathbb{E}\left[\sum_{i=1}^t Y_i  \right]\right]\leq \text{exp}\left(-\frac{\left(\frac{\epsilon}{2}\right)^2\left(\frac{1}{2}-\frac{\epsilon}{3}\right)t}{3}\right)\leq \frac{\delta}{2},
\end{align*}
where the last inequality follows from the assumption that $t\geq \frac{100\ln(1/\delta)}{\epsilon^2}$. The lemma's proof is completed by applying symmetric arguments to $S_3$, and using the union bound.
\end{proof}

\begin{proof}[Proof of Theorem~\ref{thm:step3rand}]
Let $\vx = \langle x_1, \ldots, x_n\rangle$ be the set of inputs.
For every $k \in N$, define the mechanism $\mathcal{M}_{n,k}$ as follows:
\begin{itemize}
\item Select uniformly at random a subset $S_k \subseteq N$, where $|S_k| = k$.
\item Output the median of $S_k$.
\end{itemize}
Note that $\mathcal{M}_{n,1}$ coincides with random dictator, while $\mathcal{M}_{n,n}$ is the median mechanism.
Recall that random dictator, $\mathcal{M}_{n,1}$, has an approximation ratio of $2 - 2/n$ for the social cost, while the median, $\mathcal{M}_{n,n}$, is optimal.
The approximation ratio of $\mathcal{M}_{n,k}$ improves as $k$ grows from $1$ to $n$ and the mechanism is universally truthful for every  $k$; in particular, we show there exists a choice of $k$ that achieves a good tradeoff between the size of the mechanism and its approximation ratio.

First, we describe the implementation of $\mathcal{M}_{n,k}$ as a randomized decision tree.
The root has outgoing degree one and selects a function $\mathcal{F}$ that takes
$k$ arguments $Z = \{z_1, \ldots, z_k\}$ and computes the median of $z_1,\ldots,z_k$.
At execution time, $z_1,\ldots,z_k$ are instantiated using the locations $x_{i_1},\ldots,x_{i_k}$ of $k$ distinct agents, chosen uniformly at random from $k$-subsets of $N$. Note that $\mathcal{F}$ can be 
implemented with a decision tree of size $O(2^{6k})$.
%

Let $\epsilon', \delta > 0$ be fixed such that $\epsilon', \delta < \frac{1}{10}$.
By Lemma~\ref{lem:sampling}, the algorithm that samples without replacement $t = \lceil \frac{100\ln\frac{1}{\delta}}{(\epsilon')^2}\rceil$ elements 
from a set of $n$ elements returns an $\epsilon'$-median with probability $1- \delta$, as long as $t\leq \epsilon' n$.

Let $\vx\in\mathbb{R}^n$; without loss of generality $x_1\leq \cdots \leq x_n$. We wish to compare $\mathbb{E}[\soc(\vx,\mathcal{M}_{n,t}(\vx))]$ and $\soc^*(\vx)$. Let us suppose that $\mathcal{M}_{n,t}$ returns an $\epsilon'$-median, call it $x_l$. Since $x_l$ is an $\epsilon'$-median, we have that $\frac{n}{2} - \epsilon' n < l < \frac{n}{2} + \epsilon' n$. 
Take the case where $l < \frac{n}{2}$ (the other case, where $l > \frac{n}{2}$, is similar) and let $\Delta = |x_l - x_m|$, where $x_m = \mbox{median}(\vx)$.
Then by moving the facility from $x_m$ to $x_l$, the costs of the agents change as follows:
\begin{enumerate}[(i)]
\item Each agent to the left of $x_l$ (including agent $l$) has the cost decreased by exactly $\Delta$.
\item Each agent strictly between $x_l$ and $x_m$ incurs an increase in cost of at most $\Delta$.
\item Each  agent to the right of $x_m$ (including agent $m$) has the cost increased by $\Delta$.
\end{enumerate}

It follows that 
$$
\soc(\vx,x_l) \leq \soc^*(\vx) - l \cdot \Delta + (n -l) \cdot \Delta = \soc^*(\vx) + (n-2l) \cdot \Delta.
$$ 
On those instances where $\mathcal{M}_{n,t}$ does not return the median, 
the social cost is at most $(n-1) \cdot \diam(\vx)$, where $\diam(\vx) = \max_{i,j\in N} |x_i - x_j|$. 
On the other hand, the optimal cost satisfies the inequalities: $\soc^*(\vx) \geq \diam(\vx)$ and
$\soc^*(\vx) \geq l \cdot \Delta$.

Since $\mathcal{M}_{n,t}$ returns an $\epsilon'$-median with probability $1- \delta$, the ratio of the costs can be bounded by:

\begin{eqnarray*}
\frac{sc_{\mathcal{M}_{n,t}}(\vx)}{\soc^*(\vx)} & \leq & \frac{(1 - \delta) \soc^*(\vx) + \Delta (1 - \delta) (n-2l) + \delta (n-1) \cdot \diam(\vx)}{\soc^*(\vx)} \\
& \leq & (1 - \delta) + \frac{\Delta(1 - \delta)(n - 2l)}{\Delta \cdot l} + \frac{\delta (n-1) \cdot \diam(\vx)}{\diam(\vx)}\\
& = & 1 - \delta + (1 - \delta)\frac{n}{l} - 2 (1-\delta) + \delta(n-1)\\
& \leq & \delta \cdot n - 1 + (1 - \delta)\frac{2}{1 - 2 \epsilon'}\leq 1+\delta\cdot n+5\epsilon'.
\end{eqnarray*}

Given $\epsilon<1/10$, let $\epsilon'=\epsilon/10$ and $\delta = \epsilon/(2n)$, and set $t = \lceil \frac{100\ln{\frac{1}{\delta}}}{(\epsilon')^2}\rceil$. Then $\mathcal{M}_{n,t}$ can
be represented as a randomized decision tree of
size $O(2^{6t})$, which is polynomial in $n$. Moreover, for this choice of $\epsilon', \delta$, the approximation ratio of $\mathcal{M}_{n,t}$ is bounded by 
$$
1+\delta\cdot n+5\epsilon' = 1=\frac{\epsilon}{2}+\frac{\epsilon}{2}=1+\epsilon.
$$
\end{proof}

\noindent\textbf{Theorem~\ref{thm:jeefa}} (restated):
\emph{
For each $\epsilon > 0$, there exists no universally truthful mechanism given as a distribution over countably many deterministic mechanisms
that can approximate the maximum cost on the line within a factor of $2 - \epsilon$.
}
\begin{proof}
We use the following characterization due to Moulin~\cite{Moul80}. Let a voting scheme be defined as a mapping $\pi : \mathbb{R}^n \rightarrow \mathbb{R}$, such that for every tuple of inputs $\vec{x} = \langle x_1, \ldots, x_n \rangle \in \mathbb{R}^n$, the selected alternative is $\pi(\vec{x}) \in \mathbb{R}$.

\begin{lemma}[Moulin 1980] \label{lem:char}
The voting scheme $\pi$ among $n$ agents is strategy-proof if and only if there
exists for every subset $S \subseteq \{1, \ldots, n\}$ (including the empty set) a real
number $a_S \in \mathbb{R} \cup \{\pm \infty\}$ such that:
\begin{itemize}
\item For each $\vec{x} \in \mathbb{R}^n$, $\pi(\vec{x}) = \inf_{S \subseteq \{1, \ldots, n\}} \Big[ \sup_{i \in S} \{ x_i, a_S\} \Big]$.
\end{itemize}
\end{lemma}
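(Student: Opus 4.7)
The plan is to prove the biconditional in both directions, with the harder work in the "only if" direction.

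\textbf{Sufficiency.} Assume $\pi(\vec{x}) = \inf_S \sup_{i \in S}\{x_i, a_S\}$. I would first record the easy fact that $\pi$ is coordinate-wise monotone: raising $x_k$ can only weakly raise $\sup_{i \in S}\{x_i, a_S\}$ for those subsets $S$ that contain $k$ and leaves it unchanged for $S \not\ni k$, so the outer $\inf_S$ weakly increases. Now fix an agent $k$ with true peak $x_k$ and a deviation to $x_k' > x_k$; set $y = \pi(\vec{x})$ and $y' = \pi(x_k', \vec{x}_{-k})$, so $y' \geq y$. If $y \geq x_k$, then $|x_k - y'| = y' - x_k \geq y - x_k = |x_k - y|$ and the deviation is not profitable. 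If $y < x_k$, I pick any $S^*$ attaining the infimum at $\vec{x}$; necessarily $k \notin S^*$, for otherwise $\sup_{i \in S^*}\{x_i, a_{S^*}\} \geq x_k > y$ would contradict the fact that this supremum equals $y$. Therefore the supremum over $S^*$ does not depend on the $k$-th coordinate, so it still equals $y$ at the deviation profile, forcing $y' \leq y$ and hence $y' = y$. The case $x_k' < x_k$ is symmetric.

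\textbf{Necessity.} Given strategyproof $\pi$, I would first derive coordinate-wise monotonicity of $\pi$ from strategyproofness on single-peaked preferences (the standard argument: if raising $x_k$ strictly lowered $\pi$, an agent whose peak lies on the appropriate side could profitably manipulate). For each $S \subseteq N$, define
\[
a_S \;:=\; \lim_{M \to \infty} \pi(\vec{x}^{S,M}),
\]
where $x_i^{S,M} = -M$ if $i \in S$ and $x_i^{S,M} = +M$ otherwise; monotonicity ensures the limit exists in $\mathbb{R} \cup \{\pm\infty\}$. Verifying $\pi(\vec{x}) = \inf_S \sup_{i \in S}\{x_i, a_S\}$ then reduces to two inequalities. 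For the upper bound $\pi(\vec{x}) \leq \sup_{i \in S}\{x_i, a_S\}$ for every $S$, I transform $\vec{x}$ into $\vec{x}^{S,M}$ by changing one coordinate at a time, and use strategyproofness together with monotonicity to control $\pi$ at each step, concluding in the limit. For attainment, a case analysis on the position of $\pi(\vec{x})$ relative to the sorted reports identifies a distinguished $S^*$ (essentially the agents whose peaks lie weakly below $\pi(\vec{x})$) satisfying both $\max_{i \in S^*} x_i \leq \pi(\vec{x})$ (immediate from the definition) and $a_{S^*} \leq \pi(\vec{x})$ (via a similar monotone-path argument from $\vec{x}$ toward $\vec{x}^{S^*, M}$, again invoking strategyproofness at each one-coordinate change). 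These together give $\sup_{i \in S^*}\{x_i, a_{S^*}\} \leq \pi(\vec{x})$, which produces the matching lower bound on the infimum.

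The main obstacle will be the necessity direction, and within it two delicate points. First, coordinate-wise monotonicity must be extracted carefully from strategyproofness on single-peaked preferences; the argument places a hypothetical agent's peak strategically relative to any non-monotone jump in $\pi$ to exhibit a profitable deviation, and this has to be done cleanly because the domain is continuous and unbounded. Second, the $\pm\infty$ limits need to be handled rigorously, either by working with a sufficiently large finite $M$ throughout and passing to the limit at the end, or by extending the domain and establishing the appropriate continuity-like behavior of $\pi$ at extreme profiles. Once these two ingredients are in place, the construction of the $a_S$ and the verification of the two inequalities reduce to bookkeeping over single-coordinate changes along the explicit monotone paths described above.
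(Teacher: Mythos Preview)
The paper does not prove this lemma. It is stated inside the proof of Theorem~\ref{thm:jeefa} as a cited result (Moulin 1980) and is used as a black box: the paper invokes the min--max representation to argue that, on a carefully chosen profile far from all the finite constants $a_S^k$, each deterministic mechanism in the support must output some reported location. There is therefore no ``paper's own proof'' to compare against.

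As for your plan itself: the sufficiency direction is clean and correct (the infimum is over the finite family $2^N$, so a minimizing $S^*$ exists, and your case split on the position of $x_k$ relative to $y$ is exactly the standard argument). The necessity direction follows the usual route---extract monotonicity, define $a_S$ at the extreme profile, then prove the two inequalities via one-coordinate paths---and is essentially the proof one finds in Moulin's paper and later expositions. One point to tighten: the sentence ``monotonicity ensures the limit exists'' is not quite right as written, because increasing $M$ moves the $S$-coordinates down and the $(N\setminus S)$-coordinates up, so coordinate-wise monotonicity alone does not make $M\mapsto\pi(\vec{x}^{S,M})$ monotone. What you actually need (and what you implicitly use later when you write ``strategyproofness together with monotonicity'') is the \emph{uncompromising} property: if $\pi(\vec{x})=y$ and $x_k$ lies on one side of $y$, then moving $x_k$ further on that side leaves $\pi$ unchanged. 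This follows from strategyproofness, and once you have it, $\pi(\vec{x}^{S,M})$ is eventually constant (or diverges to $\pm\infty$), so the limit in $\mathbb{R}\cup\{\pm\infty\}$ is well defined. With that fix, your outline matches the standard proof.
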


Note that by definition, the output of the mechanism is always finite. This simply restricts the values of $a_S$ such that it cannot be the case that either $(i)$ $a_{\emptyset} = -\infty$, or $(ii)$ $a_S = +\infty$ for all $S \subseteq N$. 

To get some intuition first, we show an example of the median mechanism in the format required by Lemma~\ref{lem:char}.
Let $N = \{1, 2, 3\}$ and define $a_{\emptyset} = a_{1} = a_{2} = a_{3} = +\infty$ and $a_{12} = a_{23} = a_{31} = a_{123} = - \infty$, where $a_{ij}$ is the constant corresponding to subset $S = \{i,j\}$.
Then for any $\vec{x} \in \mathbb{R}^3$, we have:
\begin{eqnarray*}
\pi(\vec{x}) & = & \inf\Big\{ \sup\{a_{\emptyset}\},\\
& &  \; \; \; \; \; \; \; \; \sup\{x_1, a_1\}, \sup\{x_2, a_2\}, \sup\{x_3, a_3\},\\
& &  \; \; \; \; \; \; \; \; \sup\{x_1, x_2, a_{12}\}, \sup\{x_2, x_3, a_{23}\}, \sup\{x_3, x_1, a_{13}\},\\
& &  \; \; \; \; \; \; \; \; \sup\{x_1, x_2, x_3, a_{123}\}\Big\}
\end{eqnarray*}

For example, if $x_1 = 1, x_2 = 3$, and $x_3 = 2$, the location of the facility is:
\begin{eqnarray*}
\pi(\langle 1, 3, 2 \rangle) & = & \inf \Big\{ \sup\{+\infty \},\\
& & \; \; \; \; \; \; \; \; \sup\{1, +\infty\}, \sup\{3, +\infty\}, \sup\{2,+\infty\}, \\
& &  \; \; \; \; \; \;  \; \; \sup\{1,3, -\infty\}, \sup\{3,2, -\infty\}, \sup\{2,1, -\infty\},\\
& &  \; \; \; \;  \; \;  \; \; \sup\{1, 3, 2,  -\infty\}\Big\}\\
& = & 2,
\end{eqnarray*}
which represents the median of the input vector.

We can now analyze the approximation ratios of universally truthful mechanisms with respect to the maximum cost objective. Let $\epsilon > 0$. Take any universally truthful mechanism
$\mathcal{M}$, represented as a probability distribution over deterministic truthful mechanisms chosen from a universe
$\mathcal{U} = \{\mathcal{M}_k \; | \; k \in K\}$, where $K \subseteq \mathbb{N}$. Denote by $p_k$ the probability that mechanism $\mathcal{M}_k$ is selected during the execution of $\mathcal{M}$ 
(for any input $\vec{x} \in \mathbb{R}^n$). Note that $p_k > 0$, since otherwise $\mathcal{M}_k$ can be eliminated from the description of $\mathcal{M}$, and $\sum_{k \in K} p_k = 1$.

For each $t \in \mathbb{N}$, define the following:
\begin{itemize}
\item $\mathcal{S}_t = \{ k \; | \; k \in K \; \mbox{and} \; p_k \geq \frac{1}{2^t} \}$ --- the set of indices of mechanisms $\mathcal{M}_k$ taken with probability at least $1/2^t$, and
\item $q_t = \sum_{k \in \mathcal{S}_t} p_k$ --- the probability that some mechanism $\mathcal{M}_k$ with $k \in \mathcal{S}_t$ is selected.
\end{itemize}
Note that $\emptyset \subseteq \mathcal{S}_0 \subseteq \mathcal{S}_1 \subseteq \ldots \subseteq \mathcal{S}_T \subseteq \ldots \subseteq \mathcal{U}$ and $0 \leq q_0 \leq q_1 \leq \ldots \leq q_T \leq \ldots \leq 1$. 
We have that
$\lim_{t \to \infty} q_t = 1$, and so there exists $T \in \mathbb{N}$ such that $q_T > 1 - \epsilon/2$. Clearly $\mathcal{S}_T$ is a finite set and each (deterministic) mechanism $\mathcal{M}_k$ with $k \in \mathcal{S}_T$ has the property that $p_k \geq 1/2^T$.

By Lemma~\ref{lem:char}, for each $k \in \mathcal{S}_T$, there exist constants $a_S^{k} \in \mathbb{R} \cup \{\pm \infty\}$ for each subset $S \subseteq \{1, \ldots, n\}$, such that
$\mathcal{M}_k(\vec{x}) = \inf_{S \subseteq \{1, \ldots, n\}} \Big[ \sup_{i \in S} \{ x_i, a_S^k\} \Big]$, for all $\vec{x} \in \mathbb{R}^n$.


Define $P = \bigcup_{k \in \mathcal{S}_T} \bigcup_{S \subseteq \{1, \ldots, n\}} \{ a_S^k \; | \; - \infty <  a_S^k < + \infty \}$ as the set of all finite 
constant points hardcoded in the mechanisms indexed by $\mathcal{S}_T$.
Since $P$ is finite, there exists a contiguous interval $[a, a+1]$ on the line such that $P \cap [a,a+1] = \emptyset$ and the points $a, a + 1$ are \emph{far} from the set $P$, i.e. for all $y \in P$, we have that $d(y, a) > 2^T$ and $d(y, a+1) > 2^T$.


Let $\vec{x}$ be defined as follows:
$x_1 = \ldots = x_{n-1} = a$ and $x_n = a + 1$. The optimal maximum cost on input $\vec{x}$ is $1/2$ and can be obtained by placing the facility at $y^* = a + 1/2$.
We analyze the behavior of $\mathcal{M}$ on input $\vec{x}$ and consider two cases:
\begin{enumerate}
\item If there exists $k \in \mathcal{S}_T$ such that $\mathcal{M}_k(\vec{x}) \in P$, then by definition of $P$ and $\vec{x}$, the approximation ratio of $\mathcal{M}$ can be bounded as follows:

\begin{eqnarray*}
\frac{ \mathbb{E} \Big[ \mac (\vec{x}, \mathcal{M}(\vec{x})) \Big] }{\mac^*(\vec{x})} & = & \frac{ \mathbb{E} \Big[ \max_{i \in N} \cost(x_i, \mathcal{M}(\vec{x})) \Big] }{\mac^*(\vec{x})} \\
& = & 2 \cdot \sum_{k \in \mathcal{U}} \left( p_k \cdot  \max_{i \in N} \cost(x_i, \mathcal{M}_k(\vec{x})) \right) \\
& \geq & 2 \left( \frac{1}{2^T} \right) 2^{T} = 2 \\
\end{eqnarray*}

\item Otherwise, $\mathcal{M}_k(\vec{x}) \not \in P$ for all $k \in \mathcal{S}_T$. Then by Lemma~\ref{lem:char}, for each mechanism $M_k$ with $k \in \mathcal{S}_T$, there exists $i_k \in N$ such that
$\mathcal{M}_k(\vec{x}) = x_{i_k}$. Since $x_i \in \{a, a+ 1\}$ for all $i \in N$, the maximum cost incurred when mechanism $\mathcal{M}_k$ gets selected is $d(a, a+ 1) = 1$.
Then by choice of $\mathcal{S}_T$, the
approximation ratio of $\mathcal{M}$ can be bounded by:

\begin{eqnarray*}
\frac{ \mathbb{E} \Big[ \mac (\vec{x}, \mathcal{M}(\vec{x})) \Big] }{\mac^*(\vec{x})} & = & \frac{ \mathbb{E} \Big[ \max_{i \in N} \cost(x_i, \mathcal{M}(\vec{x})) \Big] }{\mac^*(\vec{x})} \\
& = &2 \cdot \sum_{k \in \mathcal{U}} \left( p_k \cdot  \max_{i \in N} \cost(x_i, \mathcal{M}_k(\vec{x})) \right) \\
& \geq & 2 \cdot \sum_{k \in \mathcal{S}_T} \Big( p_k \cdot 1 \Big) = 2 \cdot q_T \\
& > & 2 (1 - \epsilon/2) = 2 - \epsilon
\end{eqnarray*}
\end{enumerate}

From Cases 1 and 2 we obtain that the approximation ratio of $\mathcal{M}$ on input $\vec{x}$ is worse than $2 - \epsilon$, 
which completes the proof of the theorem.
\end{proof}

\end{document}